\newtheorem{lemma}[]{Lemma}
\newtheorem{theorem}[]{Theorem}
\newtheorem{remark}[]{Remark}
\newtheorem{example}[]{Example}
\newtheorem{transformation}{Transformation}
\tikzset{>=stealth, shorten >=1pt}
\tikzset{every edge/.style = {thick, ->, draw}}
\tikzset{every loop/.style = {thick, ->, draw}}
\tikzset{mystate/.style={draw,inner sep=3,circle}}
\newcommand{\nats}{\mathbb{N}}
\newcommand{\size}[1]{|#1|}
\renewcommand{\epsilon}{\varepsilon}
\renewcommand{\phi}{\varphi}
\newcommand{\set}[1]{\{#1\}}
\renewcommand{\complement}[1]{\overline{#1}}
\newcommand{\bigo}{\mathcal{O}}
\newcommand{\delaygame}[1]{\Gamma\!_{k}(#1)}
\newcommand{\delaygamep}[1]{\Gamma\!_{k'}(#1)}
\newcommand{\SigmaI}{\Sigma_I}
\newcommand{\SigmaO}{\Sigma_O}
\newcommand{\stratO}{\tau_O}
\newcommand{\stratI}{\tau_I}
\newcommand{\p}{P}
\newcommand{\arenagame}{\mathcal{G}}
\newcommand{\win}{\mathrm{Win}}
\newcommand{\trans}{\rightarrow}
\newcommand{\translabel}[1]{\xrightarrow{{#1}}}
\newcommand{\stratC}{\tau_c}
\newcommand{\stratE}{\tau_e}
\newcommand{\play}{\mathrm{play}}
\newcommand{\pref}[2]{#1[#2]}
\newcommand{\prefs}{\mathrm{Pref}(\arenagame)}
\newcommand{\prefsE}{\mathrm{Pref}_e(\arenagame)}
\newcommand{\prefsC}{\mathrm{Pref}_c(\arenagame)}
\newcommand{\prob}[1]{{\cal P}\left(#1\right)}
\newcommand{\aut}{\mathcal{A}}
\newcommand{\init}{I}
\newcommand{\col}{\Omega}
\newcommand{\pspace}{{\upshape{\textsc{PSpace}}}\xspace}
\newcommand{\exptime}{{\upshape{\textsc{ExpTime}}}\xspace}
\newcommand{\twoexp}{{\upshape{\textsc{2ExpTime}}}\xspace}
\newcommand{\threeexp}{{\upshape{\textsc{3ExpTime}}}\xspace}
\newcommand\abbrv[1]{}
\title{Strategies Resilient to Delay:\\Games under Delayed Control vs.\ Delay Games}
\author{Martin Fränzle
\institute{Carl v.\ Ossietzky Universität\\ Oldenburg, Germany}
\email{martin.fraenzle@uol.de}
\and
Sarah Winter
\institute{Université libre de Bruxelles\\ Brussels, Belgium}
\email{swinter@ulb.ac.be}
\and
Martin Zimmermann
\institute{Aalborg University\\ Aalborg, Denmark}
\email{mzi@cs.aau.dk}
}
\begin{document}

\maketitle
\begin{abstract}
We compare games under delayed control and delay games, two types of infinite games modelling asynchronicity in reactive synthesis. Our main result, the interreducibility of the existence of sure winning strategies for the protagonist, allows to transfer known complexity results and bounds on the delay from delay games to games under delayed control, for which no such results had been known.  We furthermore analyze existence of randomized strategies that win almost surely, where this correspondence between the two types of games breaks down.
\end{abstract}

\section{Introduction}

Two-player zero-sum games of infinite duration are a standard model for the synthesis of reactive controllers, i.e., correct-by-construction controllers that satisfy their specification even in the presence of a malicious environment. 
In such games, the interaction between the controller and the environment is captured by the rules of the game and the specification on the controller induces the winning condition of the game. 
Then, computing a correct controller boils down to computing a winning strategy. 

Often, it is convenient to express the rules in terms of a graph capturing the state-space such that moves correspond to transitions between these states.
The interaction between the controller and the environment then corresponds to a path through the graph and the winning condition is a language of such paths, containing those that correspond to interactions that satisfy the specification on the controller.

In other settings, it is more convenient to consider a slightly more abstract setting without game graphs, so-called Gale-Stewart games~\cite{GS}.
In such games, the players alternatingly pick a sequence of letters, thereby constructing an infinite word.
The winning condition is a language over infinite words, containing the winning words for one player. 
To capture the synthesis problem, the winning condition has to encode both the specification on the controller as well as the rules of interaction. 
It is straightforward to transform a graph-based game into a Gale-Stewart game and a Gale-Stewart game into a graph-based game such that the existence of winning strategies for both players is preserved.

In the most basic setting of synthesis, both the controller and the environment are fully informed about the current state of the game (complete information). 
However, this scenario is not always realistic. 
Thus, much effort has been poured into studying games under incomplete information where the players are only partially informed about the current state of the game.
Here, we are concerned with a special type of partial information designed to capture delays in perception and action. Such delays either render the most recent moves of the opponent invisible to a player or induce a time lag between the selection and the implementation of an own move, respectively.

As a motivating example, consider the domain of cooperative driving: Here, the exchange of information between cars is limited (and therefore delayed) by communication protocols that have to manage the available bandwidth to transfer information between cars. Other delaying factors include, e.g., complex signal processing chains based on computer vision to detect the locations of obstacles.
Thus, decisions have to be made based on incomplete information, which only arrives after some delay.

\textbf{Games under Delayed Control.}
Chen et al.~\cite{ChenFLMZ21} introduced (graph) games under delayed control to capture this type of incomplete information. 
Intuitively, assume the players so far have constructed a finite path~$v_0 \cdots v_k$ through the graph. 
Then, the controller has to base her decision on a visible proper prefix $v_0 \cdots v_{k-\delta}$, where $\delta$ is the amount of delay. 
Hence, the suffix~$v_{k-\delta+1} \cdots v_k$ is not yet available to base the decision on, although the decision to be made is to be applied at the last state $v_k$ in the sequence.

They showed that solving games under delayed control with safety conditions and with respect to  a given delay is decidable: 
They presented two algorithms, an exponential one based on a reduction to delay-free safety games using a queue of length~$\delta$, and a more practical incremental algorithm synthesizing a series of
controllers handling increasing delays and reducing game-graph size in between. 
They showed that even a naïve implementation of this algorithm outperforms the reduction-based one, even when the latter is used with state-of-the-art solvers for delay-free games.
However, the exact complexity of the incremental algorithm and that of solving games under delayed control remained open.

Note that asking whether there is some delay~$\delta$ that allows controller to win reduces to solving standard, i.e., delay-free games, as they correspond to the case~$\delta = 0$. The reason is monotonicity in the delay: if the controller can win for delay $\delta$ then also for any $\delta' < \delta$.
More interesting is the question whether controller wins with respect to  every possible delay. 
Chen et al.\ conjectured that there is some exponential $\delta$ such that if the controller wins under delay~$\delta$, then also under every $\delta'$.

\textbf{Delay Games.}
There is also a variant of Gale-Stewart games modelling delayed interaction between the players \cite{DBLP:conf/icalp/HoschL72}. 
Here, the player representing the environment (often called Player~$I$) has to provide a lookahead on her moves, i.e., the player representing the controller (accordingly called Player~$O$) has access to the first $n+k$ letters picked by Player~$I$ when picking her $n$-th letter.
So, $k$ is the amount of lookahead that Player~$I$ has to grant Player~$O$. 
Note that the lookahead benefits Player~$O$ (representing the controller) while the delay in a game under delayed control disadvantages the controller.

Only three years after the seminal Büchi-Landweber theorem showing that delay-free games with $\omega$-regular winning conditions are decidable~\cite{BL69}, Hosch and Landweber showed that it is decidable whether there is a $k$ such that Player~$O$ wins a given Gale-Stewart game with lookahead~$k$~\cite{DBLP:conf/icalp/HoschL72}. 
Forty years later, Holtmann, Kaiser, and Thomas~\cite{HoltmannKT12} revisited these games (and dubbed them delay games). 
They  proved that if Player~$O$ wins a delay game then she wins it already with at most doubly-exponential lookahead (in the size of a given deterministic parity automaton recognizing the winning condition).
Thus, unbounded lookahead does not offer any advantage over doubly-exponential lookahead in games with $\omega$-regular winning conditions.
Furthermore, they presented an algorithm with doubly-exponential running time solving delay games with $\omega$-regular winning conditions conditions, i.e., determining whether there exists a $k$ such that Player~$O$ wins a given delay game (with its winning condition again given by a deterministic parity automaton) with lookahead~$k$.

Both upper bounds were improved and matching lower bounds were later proven by Klein and Zimmermann~\cite{KleinZ14}: Solving delay games is \exptime-complete and exponential lookahead is both necessary to win some games and sufficient to win all games that can be won.
Both lower bounds already hold for winning conditions specified by deterministic safety automata while the upper bounds hold for deterministic parity automata.
The special case of solving games with conditions given as reachability automata is \pspace-complete, but exponential lookahead is still necessary and sufficient.
Thus, there are tight complexity results for delay games, unlike for games under delayed control.

\textbf{Our Contributions.}
In this work, we exhibit a tight relation between controller in a game under delayed control and Player~$I$ in a delay game (recall that these are the players that are disadvantaged by delay and lookahead, respectively).
Note that winning conditions in games under delayed control are always given from the perspective of controller (i.e., she has to avoid unsafe states in a safety game) while winning conditions in delay games are always given from the perspective of Player~$O$. 
Hence, as we relate controller and Player~$I$, we always have to complement winning conditions.

More precisely, we show that one can transform a safety game under delayed control in polynomial time into a delay game with a reachability condition for Player~$O$ (i.e., with a safety condition for Player~$I$) such that controller wins the game under delayed control with delay~$\delta$ if and only if Player~$I$ wins the resulting delay game with lookahead of size~$\frac{\delta}{2}$. 
Dually, we show that one can transform a delay game with safety condition for Player~$I$ in polynomial time into a reachability game under delayed control such that Player~$I$ wins the delay game with lookahead of size~$\delta$ if and only if controller wins the resulting game under delayed control with delay~$2\delta$. 
Thus, we can transfer both upper and lower bound results on complexity and on (necessary and sufficient) lookahead from delay games to delayed control. 
In particular, determining whether controller wins a given safety game under delayed control for every possible delay is \pspace-complete. 
Our reductions also prove the conjecture by Chen et al.\ on the delays that allow controller to win such games.
Furthermore, we generalize our translation from games with safety conditions to games with parity conditions and games with winning conditions given by formulas of Linear Temporal Logic (LTL)~\cite{Pnueli}, again allowing us to transfer known results for delay games to games under delayed control.

Note that we have only claimed that the existence of winning strategies for the controller in the game under delayed control and Player~$I$ in the delay game coincides. 
This is no accident! 
In fact, the analogous result for relating environment and Player~$O$ fails. 
This follows immediately from the fact that delay games are determined while games under delayed control are undetermined, even with safety conditions. The reason is that the latter games are truly incomplete information games (which are typically undetermined) while delay games are perfect information games. 

We conclude by a detailed comparison between environment and Player~$O$ in both the setting with deterministic as well as in the setting with randomized strategies. 
The latter setting increases power for both the controller and the environment, making them win (almost surely) games under delayed control that remain undetermined in the deterministic setting, but it also breaks the correspondence between controller and Player~$I$ observed in the deterministic setting: there are games that controller wins almost surely while Player~$I$ surely looses them.

All proofs which are omitted due to space restrictions can be found in the full version~\cite{arxiv}.

\section{Preliminaries}
\label{sec:prels}

We denote the non-negative integers by~$\nats$.
An {alphabet}~$\Sigma$ is a non-empty finite set of {letters}.
A {word} over $\Sigma$ is a finite or infinite sequence of letters of $\Sigma$: 
The set of finite words (non-empty finite words, infinite words) over $\Sigma$ is denoted by $\Sigma^*$ ($\Sigma^+$, $\Sigma^\omega$).
The {empty word} is denoted by $\varepsilon$,
the length of a finite word~$w$ is denoted by~$\size{w}$.
Given two infinite words~$\alpha \in (\Sigma_0)^\omega$ and $\beta \in (\Sigma_1)^\omega$, we define $\binom{\alpha}{\beta} = \binom{\alpha(0)}{\beta(0)}\binom{\alpha(1)}{\beta(1)}\binom{\alpha(2)}{\beta(2)} \cdots \in (\Sigma_0 \times \Sigma_1)^\omega$. 

\subsection{Games under Delayed Control}
\label{subsec:games}

Games under delayed control are played between two players, controller and environment. 
For pronomial convenience~\cite{McNaughton00}, we refer to controller as she and environment as he.

A game~$\arenagame = ( S, s_0, S_c, S_e, \Sigma_c, \Sigma_e, \trans, \win)$ consists of a finite set~$S$ of states partitioned into the states~$S_c \subseteq S$ of the controller and the states~$S_e \subseteq S$ of the environment, an initial state~$s_0 \in S_c$, the sets of actions~$\Sigma_c$ for the controller and $\Sigma_e$ for the environment, a transition function~$\trans\colon (S_c \times \Sigma_c) \cup (S_e \times \Sigma_e) \rightarrow S$ such that $s \in S_c$ and $\sigma \in \Sigma_c$ implies $\trans\!\!(s,\sigma) \in S_e$ and vice versa, and a winning condition~$\win \subseteq S^\omega$.
We write $s \translabel{\sigma} s'$ as shorthand for $s' =\, \trans\!\!(s, \sigma)$.

A play in $\arenagame$ is an infinite sequence~$\pi = \pi_0 \sigma_0  \pi_1 \sigma_1  \pi_2 \sigma_2 \cdots $ satisfying $\pi_0 = s_0$ and $\pi_n \translabel{\sigma_n} \pi_{n+1}$ for all $n \ge 0$.
We say that controller wins $\pi$ if $\pi_0 \pi_1 \pi_2 \cdots \in \win$; otherwise, we say that environment wins $\pi$.
The play prefix of $\pi$ of length~$n$ is defined as $\pref{\pi}{n} = \pi_0 \sigma_0 \cdots \sigma_{n-1} \pi_n$, i.e., $n$ is the number of actions (equivalently, the number of transitions).
We denote by $\prefs$ the set of play prefixes of all plays in $\arenagame$, which is partitioned into the sets~$\prefsC$ and $\prefsE$ of play prefixes ending in $S_c$ and $S_e$, respectively. 
Due to our alternation assumption, play prefixes of even (odd) length are in $\prefsC$ ($\prefsE$).

Fix some even $\delta \ge 0$. A strategy for the controller in $\arenagame$ under delay~$\delta$ is a pair $(\alpha, \stratC)$ where $\alpha \in (\Sigma_c)^{\frac{\delta}{2}}$ and $\stratC \colon \prefsC \rightarrow \Sigma_c$ maps play prefixes ending in $S_c$ to actions of the controller.
A play~$\pi_0 \sigma_0 \pi_1 \sigma_1 \pi_2 \sigma_2 \cdots$ is consistent with $(\alpha,\stratC)$ if $\sigma_0 \sigma_2 \cdots \sigma_{\delta -4} \sigma_{\delta-2} = \alpha$ and $\sigma_{2n} = \stratC(\pref{\pi}{2n - \delta})$ for all $2n > \delta -2$, i.e., controller has access to environment's actions with a delay of $\delta$.
In particular, her first $\frac{\delta}{2} +1$ actions are independent of environment's actions and, in general, her $n$-th action~$\sigma_{2n}$ only depends on the actions~$\sigma_1, \ldots, \sigma_{(2n-\delta) -1}$ picked by environment, but not on the actions~$\sigma_{(2n-\delta)+1},\ldots, \sigma_{2n-1}$.
The strategy~$(\alpha, \stratC)$ is winning under delay~$\delta$ if every play that is consistent with it is winning for controller.
Controller wins $\arenagame$ under delay~$\delta$ if she has a winning strategy under delay~$\delta$ for $\arenagame$.

\begin{remark}
\label{remark:delaycontrol} \hfill
 \begin{enumerate}
    \item 
The notion of winning strategy for controller under delay~$0$ is the classical one for delay-free games (cf.~\cite{GTW02}).
    \item
\label{remarkitem:delaycontrol:monotonicity}  If 
controller wins $\arenagame$ under delay~$\delta$, then also under every delay~$\delta' < \delta$~\cite{ChenFLMZ21}.
 \end{enumerate}
\end{remark}

A strategy for environment is a mapping~$\stratE\colon \prefsE \rightarrow \Sigma_e$.
A play~$\pi_0 \sigma_0 \pi_1 \sigma_1 \pi_2 \sigma_2 \cdots$ is consistent with $\stratE$ if $\sigma_{2n+1} = \stratE(\pi_0 \sigma_0 \cdots \sigma_{2n-1} \pi_{2n+1})$ for all $n \ge 0$, i.e., environment has access to the full play prefix when picking his next action.
The strategy~$\stratE$ is winning, if every play that is consistent with it is winning for the environment (i.e., the sequence of states is not in $\win$).
Further, we say that environment wins $\arenagame$, if he has a winning strategy for $\arenagame$. 
Note that the two definitions of strategies are in general not dual, e.g., the one for environment is not defined with respect to a delay~$\delta$. 
 \begin{remark}
 The notion of winning strategy for environment is the classical one for delay-free games (cf.~\cite{GTW02}).
 \end{remark}

We say that a game under delayed control~$\arenagame$ is determined under delay~$\delta$, if either controller wins $\arenagame$ under delay~$\delta$ or environment wins $\arenagame$.
Let us stress that determinacy is defined with respect to some fixed~$\delta$ and that $\arenagame$ may be determined for some $\delta$, but undetermined for some other~$\delta'$ (due to the non-dual definition of strategies).
\cref{remark:undetermined} shows an undetermined safety (!) game under delayed control.

\begin{example}
\label{example:gameunderdelayedcontrol}
Consider the game~$\arenagame = (S, s_\init, S_c, S_e, \Sigma_c, \Sigma_e, \rightarrow, \win)$ depicted in \cref{fig:gameunderdelayedcontrol} where $\win $ contains all plays that do not visit the black vertex.
Note that this is a safety condition. 
In particular, if controller does not pick action~$b$ at $c_2$ and does not pick action~$a$ at $c_3$, then the vertex $e_3$ is never reached.
This is straightforward without delay, but we claim that controller can also win $\arenagame$ under delay~$2$.

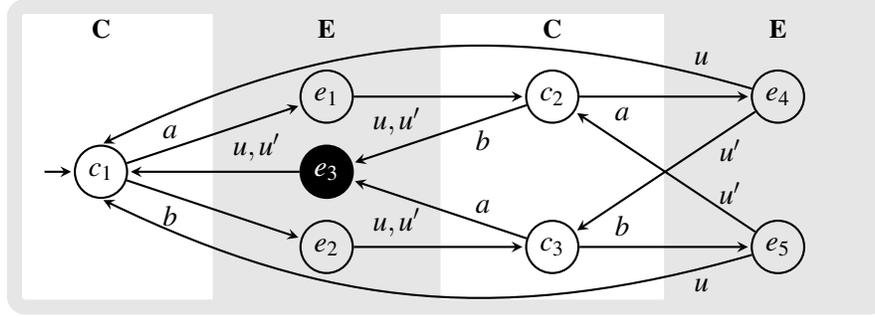
\begin{figure}
    \centering
\begin{tikzpicture}[thick]
    
    \draw[gray!20, rounded corners,line width=2mm] (-1.15,-1.8) rectangle (10.35,2.2);  
    
    \draw[fill=gray!20,gray!20] (1.5,-1.8) rectangle (4.5,2.2); 
    \draw[fill=gray!20,gray!20] (7.5,-1.8) rectangle (10.3,2.2); 
    
    \node[black] at (0,1.9) {\small \bf C};
    \node[black] at (3,1.9) {\small \bf E};
    \node[black] at (6,1.9) {\small \bf C};
    \node[black] at (9,1.9) {\small \bf E};

    \node[mystate] (c1) at (0,0) {$c_1$};
    \node[mystate] (e1) at (3,1) {$e_1$};
    \node[mystate] (e2) at (3,-1) {$e_2$};
    \node[mystate] (c2) at (6,1) {$c_2$};
    \node[mystate] (c3) at (6, -1) {$c_3$};
    \node[mystate] (e4) at (9,1) {$e_4$};
    \node[mystate,fill=black,text=white] (e3) at (3,0) {$e_3$};
    \node[mystate] (e5) at (9,-1) {$e_5$};

     \path[-stealth]
     (-.75,0) edge (c1)
     (c1) edge[near start] node[above] {$a$} (e1)
     (c1) edge[near start] node[below] {$b$} (e2)
     (e1) edge[near start] node[below] {$u,u'$} (c2)
     (e2) edge[near start] node[above] {$u,u'$} (c3)
     (c2) edge[near start] node[below] {$a$} (e4)
     (c2) edge[near start] node[below] {$b$} (e3)
     (c3) edge[near start] node[above] {$a$} (e3)
     (c3) edge[near start] node[above] {$b$} (e5)
     (e3) edge[] node[above,near start] {$u,u'$} (c1)
     (e4) edge[bend right =21] node[above,pos=.07] {$u$} (c1.north)
     (e4) edge[bend right=0] node[very near start,below] {$u'$} (c3)
     (e5) edge[bend left =21] node[above,below,pos=.07] {$u$} (c1.south)
     (e5) edge[bend left=0] node[above,very near start] {$u'$} (c2)
    ;

\end{tikzpicture}
    \caption{The game for \cref{example:gameunderdelayedcontrol}. Controller wins all plays that never visit the black vertex. Note that we have $\Sigma_c =\set{a,b} $ and $\Sigma_e = \set{u,u'}$.}
    \label{fig:gameunderdelayedcontrol}
\end{figure}

To gain some intuition, consider a play prefix~$\pi_0 \sigma_0 \pi_1 \cdots \pi_{n-1}\sigma_{n-1}\pi_{n}$ with $n\ge 4$ and $\pi_n \in S_c$.
Then, controller has to pick an action~$\sigma_n$ to continue the prefix.
However, due to the delayed control, she has to do so based on the prefix~$\pi_0 \sigma_0 \pi_1 \cdots \pi_{n-3}\sigma_{n-3}\pi_{n-2}$.

If $\pi_{n-2}$ is $c_2$, then $\pi_n$ is either $c_3$ or $c_1$. 
Hence, picking $\sigma_n = b$ is the only safe choice.
Dually, if $\pi_{n-2}$ is $c_3$, then $\pi_n$ is either $c_2$ or $c_1$. 
Hence, picking $\sigma_n = a$ is the only safe choice.

Finally, assume $\pi_{n-2}$ is $c_1$.
Then, $\pi_n$ is either $c_2$ or $c_3$. 
In the former case, picking $\sigma_n = a$ is the only safe choice, in the latter case, picking $\sigma_n = b$ is the only safe choice. 
So, controller needs to distinguish these two cases, although she has no access to $\pi_n$.

But she can do so by inspecting~$\pi_{n-3}$ (which she has access to): 
As a predecessor of $\pi_{n-2} = c_1$, it can either be $e_4$, $e_5$, or $e_3$.
In the latter case, the play is already losing.
Thus, we disregard this case, as we construct a winning strategy. 
So, assume we have $\pi_{n-3} = e_4$ (the case~$\pi_{n-3} = e_5$ is dual).
Then, we must have $\pi_{n-4} = c_2$ (the only predecessor of $e_4$) and, by our analysis of the safe moves above, controller must have picked $\sigma_{n-2} = b$ (based, due to delay, on the prefix ending in $\sigma_{n-4} = c_2$). 
From this we can conclude $\pi_{n-1} = e_2$ and thus $\pi_n = c_3$ (the only successor of $e_2$).
Thus, she can safely pick $\sigma_{n} = b$.

This intuition, and the necessary initialization, is implemented by the strategy~$(\alpha,\stratC)$ with $\alpha = a$ and 
\[
\stratC(\pi_0 \sigma_0 \pi_1 \cdots \pi_{n-3}\sigma_{n-3}\pi_{n-2}) = \begin{cases}
a &\text{$n = 2$ and $\pi_0 = c_1$,} \\
b&\text{$n > 2$, $\pi_{n-2} = c_1$, and $\pi_{n-3} = e_4$,} \\
a&\text{$n > 2$, $\pi_{n-2} = c_1$, and $\pi_{n-3} = e_5$,} \\
b&\text{$\pi_{n-2} = c_2$,} \\
a&\text{$\pi_{n-2} = c_3$}.
\end{cases}
\]
An induction over the play length shows that $(\alpha,\stratC)$ is winning for controller under delay~$2$.
\end{example}

\begin{remark}
Our definition of games under delayed control differs in three aspects from the original definition of Chen et al.~\cite{ChenFLMZ21}.
\begin{itemize}

    \item We allow arbitrary winning conditions while Chen et al.\ focused on safety conditions.
    
    \item The original definition allows nondeterministic strategies (a strategy that returns a nonempty set of actions, each one of which can be taken), while we restrict ourselves here to deterministic strategies (a strategy that returns a single action to be taken). 
    The motivation for their use of nondeterministic strategies is the fact that they can be refined if additional constraints are imposed, which Chen et al.'s algorithm computing a winning strategy relies on.
    
    Here, on the other hand, we are just interested in the existence of winning strategies.
    In this context, it is sufficient to consider deterministic strategies, as controller has a nondeterministic winning strategy if and only if she has a deterministic winning strategy.
    Also, strategies in delay games are deterministic, so the transformation between games under delayed control and delay games can be formulated more naturally for deterministic strategies.
    
    \item The original definition also allowed odd delays~$\delta$ while we only allow even delays. 
    As we will see in \cref{sec:transformation}, the transformation of games under delayed control to delay games is naturally formulated for even delays.
    This choice also simplifies definitions, as accounting for odd delays imposes an additional notational burden.
    
\end{itemize}
\end{remark}

\subsection{Delay Games}
\label{subsec:delaygames}

{Delay games} are played between two players, Player~$I$ (she) and Player~$O$ (he). 
A delay game~$\delaygame{L}$ (with constant lookahead) consists of a lookahead~$k \in \nats$ and a {winning condition}~$L \subseteq (\SigmaI \times \SigmaO)^\omega$ for some alphabets~$\SigmaI$ and $\SigmaO$. 
Such a game is played in rounds~$n = 0,1,2, \ldots$ as follows: in round~$0$, first Player~$I$ picks a word~$x_0 \in \SigmaI^{k+1}$, then Player~$O$ picks a letter~$y_0 \in \SigmaO$. 
In round~$n>0$, Player~$I$ picks a letter~$x_n \in \SigmaI$, then Player~$O$ picks a letter~$y_n \in \SigmaO$.
Player~$O$ {wins} a play~$(x_0, y_0)(x_1, y_1)(x_2, y_2) \cdots $ if the outcome~$\binom{ x_0 x_1 x_2 \cdots }{ y_0 y_1 y_2 \cdots }$ is in $L$; otherwise, Player~$I$ wins.

A {strategy} for Player~$I$ in $\delaygame{L}$ is a mapping $\stratI \colon \SigmaO^* \rightarrow \SigmaI^*$ satisfying $\size{\stratI(\epsilon)} = k+1$ and $\size{\stratI(w)} = 1$ for all $w \in \SigmaO^+$. 
A strategy for Player~$O$ is a mapping~$\stratO \colon \SigmaI^+ \rightarrow \SigmaO$. A play~$(x_0, y_0)(x_1, y_1)(x_2, y_2) \cdots $ is {consistent} with $\stratI$ if $x_n = \stratI(y_0 \cdots y_{n-1})$ for all $n \ge 0$, and it is consistent with $\stratO$ if $y_n = \stratO(x_0 \cdots x_n)$ for all $n \ge 0$. 
So, strategies are dual in delay games, i.e., Player~$I$ has to grant some lookahead on her moves that Player~$O$ has access to.
A strategy for Player~$\p \in \set{I,O}$ is {winning}, if every play that is consistent with the strategy is won by Player~$\p$.
We say that Player~$\p\in \set{I,O}$ {wins} a game  $\delaygame{L}$ if Player~$\p$ has a winning strategy in $\delaygame{L}$.

\begin{remark}
\label{remark:delaymonotonicity}
 \hfill
 \begin{itemize}
     \item 
    If Player~$O$ wins $\delaygame{L}$, then he also wins $\delaygamep{L}$ for every $k' > k$.
     \item 
    If Player~$I$ wins $\delaygame{L}$, then she also wins $\delaygamep{L}$ for every $k' < k$.
 \end{itemize}
\end{remark}

Unlike games under delayed control, delay games with Borel winning conditions are determined~\cite{KleinZ14}, i.e., each delay game~$\delaygame{L}$ with Borel~$L$ and fixed~$k$ is won by one of the players.

\begin{example}
Consider $
L=\left\{ \binom{a_0}{b_0}\binom{a_1}{b_1}\binom{a_2}{b_2}\cdots \mid b_0 \notin\set{a_0, a_1, a_2}  \right\}$ over the alphabets~$\SigmaI = \SigmaO = \set{1,2,3,4}$.

Player~$I$ wins $\delaygame{L}$ for $k=1$ with the following strategy~$\stratI$: $\stratI(\epsilon) = 12$ and $\stratI(b_0) = b_0$, and $\stratI(w)$ arbitrary for all $w \in \SigmaO^+$ with $\size{w} > 1$: In round~$0$, after Player~$I$ has picked~$a_0a_1 = 12$, Player~$O$ has to pick some $b_0$. In order to not loose immediately, he has to pick $b_0 \notin \set{1,2}$. Then, in round~$1$, Player~$I$ picks $a_2 = b_0$ and thereby ensures $b_0 \in \set{a_0, a_1, a_2}$. 
Hence, the play is not won by Player~$O$ (it's outcome is not in $L$), therefore it is winning for Player~$I$.

However, Player~$O$ wins $\delaygame{L}$ for $k = 2$ with the following strategy~$\stratO$: $\stratO(a_0a_1a_2)$ is a letter in the nonempty set~$\SigmaO \setminus \set{a_0, a_1, a_2}$ and $\stratO(w)$ arbitrary for all $w \in \SigmaI^*$ with $\size{w} \neq 3$.
In round~$0$, after Player~$I$ has picked~$a_0a_1a_2$, Player~$O$ picks $b_0 \notin \set{a_0, a_1, a_2}$ and thus ensures that the outcome is in $L$.
\end{example}

\begin{remark}
We restrict ourselves here to the setting of constant lookahead, i.e., in a delay game~$\delaygame{L}$ in round~$n$ when Player~$O$ picks her $n$-th letter, Player~$I$ has already picked $k+n+1$ letters (note that we start in round~$0$ with the zeroth letter).
Delay games have also been studied with respect to growing lookahead, i.e., the lookahead increases during a play~\cite{HoltmannKT12}.
However, it is known that constant lookahead is sufficient for all $\omega$-regular winning conditions: if Player~$O$ wins for any lookahead (no matter how fast it is growing), then she also wins with respect to constant lookahead, which can even be bounded exponentially in the size of a deterministic parity automaton recognizing the winning condition~\cite{KleinZ14}.
Stated differently, growing lookahead does not allow to win any more games than constant lookahead.
Finally, the setting of constant lookahead in delay games considered here is the natural counterpart to games under delayed control, where the delay is fixed during a play. 
\end{remark}

\subsection{\texorpdfstring{$\mathbf{\omega}$-Automata}{omega-Automata}}
\label{subsec:automata}

A deterministic reachability automaton~$\aut = (Q, \Sigma, q_\init, \delta_\aut, F)$ consists of a finite set~$Q$ of states containing the initial state~$q_\init \in Q$ and the set of accepting states~$F \subseteq Q$, an alphabet~$\Sigma$, and a transition function~$\delta_\aut \colon Q \times \Sigma \rightarrow Q$.
The size of $\aut$ is defined as $\size{\aut} = \size{Q}$.
Let $w = w_0 w_1 w_2 \cdots \in \Sigma^\omega$.
The run of $\aut$ on $w$ is the sequence~$q_0 q_1 q_2 \cdots $ such that $q_0 = q_\init$ and $q_{n+1} = \delta_\aut(q_n, w_n)$ for all $n\ge 0$.
A run~$q_0q_1 q_2\cdots$ is (reachability) accepting if $q_n \in F$ for some $n \ge 0$.
The language (reachability) recognized by $\aut$, denoted by $L(\aut)$, is the set of infinite words over $\Sigma$ such that the run of $\aut$ on $w$ is (reachability) accepting.

A deterministic safety automaton has the form~$\aut = (Q, \Sigma, q_\init, \delta_\aut, U)$ where $Q, \Sigma, q_\init, \delta_\aut$ are as in a deterministic reachability automaton and where $U \subseteq Q$ is a set of unsafe states.
The notions of size and runs are defined as for reachability automata, too. 
A run~$q_0q_1q_2 \cdots$ is (safety) accepting if $q_n \notin U$ for all $n \ge 0$.
The language (safety) recognized by $\aut$, again denoted by $L(\aut)$, is the set of infinite words over $\Sigma$ such that the run of $\aut$ on $w$ is (safety) accepting.

A deterministic parity automaton has the form~$\aut = (Q, \Sigma, q_\init, \delta_\aut, \col)$ where $Q, \Sigma, q_\init, \delta_\aut$ are as in a deterministic reachability automaton and where $\col \colon Q \rightarrow \nats$ is a coloring of the states.
The notions of size and runs are defined as for reachability automata, too. 
A run~$q_0q_1q_2 \cdots$ is (parity) accepting if the maximal color appearing infinitely often in the sequence~$\col(q_0)\col(q_1)\col(q_2)\cdots $ is even.
The language (parity) recognized by $\aut$, again denoted by $L(\aut)$, is the set of infinite words over $\Sigma$ such that the run of $\aut$ on $w$ is (parity) accepting.

Reachability and safety automata are dual while parity automata are self-dual.

\begin{remark}
\label{remark:automataduality}
 \hfill
 \begin{enumerate}
     \item 
    Let $\aut = (Q, \Sigma, q_\init,\delta_\aut, F)$ be a deterministic reachability automaton and let $\complement{\aut}$ be the deterministic safety automaton~$(Q, \Sigma, q_\init,\delta_\aut, Q \setminus F)$. Then, $L(\complement{\aut}) = \complement{L(\aut)}$.

     \item 
    Let $\aut = (Q, \Sigma, q_\init,\delta_\aut, \col)$ be a deterministic parity automaton and let $\complement{\aut}$ be the deterministic parity automaton~$(Q, \Sigma, q_\init,\delta_\aut, q \mapsto \col(q)+1)$. Then, $L(\complement{\aut}) = \complement{L(\aut)}$.
 \end{enumerate}
\end{remark}

\section{From Games under Delayed Control to Delay Games and Back}
\label{sec:transformation}

In this section, we exhibit a tight correspondence between controller in games under delayed control and Player~$I$ in delay games. 
Recall that in a game under delayed control, it is the controller whose control is delayed, i.e., she is at a disadvantage as she only gets delayed access to the action picked by environment.
In a delay game, it is Player~$I$ who is at a disadvantage as she has to grant a lookahead on her moves to Player~$O$. 
Thus, when simulating a game under delayed control by a delay game, it is natural to let Player~$I$ take the role of controller and let Player~$O$ take the role of environment. 
Also recall that the winning condition~$\win$ in a game under delayed control is formulated from controller's point-of-view: the winning condition requires her to enforce a play in $\win$. 
On the other hand, the winning condition~$L$ of a delay game is formulated from the point-of-view of Player~$O$: Player~$O$ has to enforce a play whose outcome is in $L$. 
Thus, as Player~$I$ takes the role of controller, we need to {complement} the winning condition to reflect this change in perspective: 
The set of winning outcomes for Player~$I$ in the simulating delay game is the complement of $\win$.

In the remainder of this section, we show how to simulate a game under delayed control by a delay game and then the converse, i.e., how to simulate a delay game by a game under delayed control.

\begin{transformation}
First, we transform a game under delayed control into a delay game. 
In the resulting delay game, the players simulate a play in the game under delayed control by picking actions, which uniquely induce such a play.
To formalize this, we need to introduce some notation.
Fix a game~$\arenagame = (S, s_0, S_c, S_e, \Sigma_c, \Sigma_e, \trans,\win)$.
Note that a sequence~$\sigma_0 \sigma_1 \sigma_2 \cdots \in (\Sigma_c \Sigma_e)^\omega$ induces a unique play~$\play(\sigma_0 \sigma_1 \sigma_2 \cdots ) = \pi_0 \sigma_0 \pi_1 \sigma_1 \pi_2 \sigma_2 \cdots$ in $\arenagame$ which is defined as follows:
$\pi_0 = s_0$ and $\pi_{n+1} = \trans\!\!(\pi_n, \sigma_n)$ for all $n \ge 0$.
Likewise, a finite sequence~$\sigma_0 \sigma_1 \cdots \sigma_n \in (\Sigma_c \Sigma_e)^*(\Sigma_c + \varepsilon)$ induces a unique play prefix~$\play(\sigma_0 \sigma_1 \cdots \sigma_n)$ which is defined analogously.

Now, we define the language~$L(\arenagame) \subseteq (\Sigma_c \times \Sigma_e)^\omega$ such that $ \binom{\sigma_0}{\sigma_1} \binom{\sigma_2}{\sigma_3} \binom{\sigma_4}{\sigma_5} \cdots \in L(\arenagame)$  if and only if $\play(\sigma_0 \sigma_1 \sigma_2 \cdots )$ is winning for controller.
\end{transformation}

Now, we prove the correspondence between $\arenagame$ and $\delaygame{\complement{L(\arenagame)}}$. 
The winning condition of the delay game is the complement of $L(\arenagame)$, which implements the switch of perspective described above. 

\begin{lemma}\label{lemma:fromDelayedControlToDelay}
Let $\arenagame$ be a game and $\delta \ge 0$ even. Controller wins $\arenagame$ under delay $\delta$ if and only if Player~$I$ wins $\delaygame{\complement{L(\arenagame)}}$ for $k = \frac{\delta}{2}$.
\end{lemma}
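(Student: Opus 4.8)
The plan is to establish a correspondence between strategies for controller in $\arenagame$ under delay $\delta$ and strategies for Player~$I$ in $\delaygame{\complement{L(\arenagame)}}$ with $k = \frac{\delta}{2}$, showing that winning is preserved in both directions. The key observation is the bookkeeping alignment: in the game under delayed control, actions are indexed $\sigma_0, \sigma_1, \sigma_2, \ldots$ where controller plays the even-indexed $\sigma_{2n}$ and environment the odd-indexed $\sigma_{2n+1}$. A single $\delaygame{}$-round $n$ bundles the pair $\binom{\sigma_{2n}}{\sigma_{2n+1}}$, with Player~$I$ contributing the top component (controller's action) and Player~$O$ the bottom (environment's action). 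Under this pairing, a lookahead of $k = \frac{\delta}{2}$ for Player~$I$ means that in $\delaygame{}$-round~$n$ she commits to her letter $x_n = \sigma_{2n}$ before seeing $y_n = \sigma_{2n+1}$, and indeed before seeing $y_{n-k}, \ldots, y_{n-1}$, i.e.\ she fixes $\sigma_{2n}$ knowing only $\sigma_1, \ldots, \sigma_{2(n-k)-1}$. This is exactly the dependency $\sigma_{2n} = \stratC(\pref{\pi}{2n-\delta})$ from the definition of consistency with $(\alpha, \stratC)$, since $2(n-k) = 2n - \delta$.

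\emph{For the forward direction}, I would take a winning strategy $(\alpha, \stratC)$ for controller under delay $\delta$ and build a strategy $\stratI$ for Player~$I$. The initial block $\stratI(\epsilon) \in \SigmaI^{k+1}$ encodes controller's first $\frac{\delta}{2}+1$ actions, which by definition are independent of environment: the prefix $\alpha = \sigma_0 \sigma_2 \cdots \sigma_{\delta-2}$ gives the first $\frac{\delta}{2}$ of them, and $\stratC$ applied to the (uniquely determined) prefix of length $0$ supplies $\sigma_\delta$, the $(\frac{\delta}{2}+1)$-th action. For $w \in \SigmaO^+$ of length $n$, the letters of $w$ are exactly the environment actions $\sigma_1, \sigma_3, \ldots, \sigma_{2n-1}$ played so far, from which one reconstructs the play prefix $\pref{\pi}{2(n-k)}$ and sets $\stratI(w) = \stratC(\pref{\pi}{2(n-k)}) = \sigma_{2(n+k)}$; this is well-defined precisely because that prefix depends only on environment actions $\sigma_1, \ldots, \sigma_{2(n-k)-1}$, all of which are available. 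Any play consistent with $\stratI$ then corresponds, via $\play$, to a play in $\arenagame$ consistent with $(\alpha, \stratC)$, so its outcome lies in $\win$, i.e.\ in $\complement{(\complement{L(\arenagame)})}$—hence it is \emph{not} in $\complement{L(\arenagame)}$, which means Player~$I$ wins. The \emph{converse direction} is symmetric: from a winning $\stratI$ one reads off $\alpha$ from $\stratI(\epsilon)$ and defines $\stratC$ on reachable prefixes by the same index translation, checking that the resulting $(\alpha, \stratC)$ is a legitimate delay-$\delta$ strategy (its output at $\pref{\pi}{2n-\delta}$ depends only on the permitted environment actions).

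\emph{The main obstacle} I expect is purely the index arithmetic: matching the offset $2(n-k) = 2n - \delta$ correctly throughout, handling the initial segment (where $2n \le \delta - 2$ and controller's actions come from $\alpha$ rather than $\stratC$, contributing the $k+1$ letters of $\stratI(\epsilon)$) separately from the general recurrence, and verifying that the lookahead window in the delay game exactly coincides with the set of environment actions that controller is allowed to depend on. A secondary subtlety is that $\stratI$ and $\stratC$ are defined on different domains ($\SigmaO^*$ versus $\prefsC$), so I must use the induced-play map $\play$ to translate between a sequence of environment actions and the unique play prefix it determines, and argue that this translation is a bijection on the relevant (reachable) prefixes. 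Once the correspondence of plays is set up, the equivalence of winning is immediate from the definition of $L(\arenagame)$ and the complementation that reflects the perspective switch.
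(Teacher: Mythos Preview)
Your approach is essentially the paper's: translate strategies between the two games via the action-sequence encoding, handle the initial block $\stratI(\epsilon)=\alpha\cdot\stratC(s_0)$ separately, and let the complementation in $\complement{L(\arenagame)}$ absorb the perspective switch. The one concrete slip is in your forward-direction formula. When $|w|=n$, Player~$I$ must output $x_n=\sigma_{2(n+k)}$, and the consistency relation $\sigma_{2m}=\stratC(\pref{\pi}{2m-\delta})$ gives $\sigma_{2(n+k)}=\stratC(\pref{\pi}{2n})$, not $\stratC(\pref{\pi}{2(n-k)})$; your claimed equality $\stratC(\pref{\pi}{2(n-k)})=\sigma_{2(n+k)}$ is off by exactly $\delta$. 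The relevant prefix $\pref{\pi}{2n}$ is reconstructible from $y_0,\ldots,y_{n-1}=\sigma_1,\ldots,\sigma_{2n-1}$ together with the controller actions already committed via $x_0,\ldots,x_{n-1}$, so no information is missing. With that index correction (which you yourself flagged as the expected obstacle) the argument matches the paper's proof.
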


Now, we consider the converse and transform a delay game into a game under delayed control.

\begin{transformation}
Fix a delay game~$\delaygame{L}$. 
We construct a game under delayed control to simulate $\delaygame{L}$ as follows: The actions of controller are the letters in $\SigmaI$, and the actions of environment are the letters in $\SigmaO$. Thus, by picking actions, controller and environment construct the outcome of a play of $\delaygame{L}$.
As winning conditions of games under delayed control only refer to states visited by a play, but not the actions picked by the players, we reflect the action picked by a player in the state reached by picking that action.
Here, we have to require without loss of generality that $\SigmaI$ and $\SigmaO$ are disjoint.

Formally, we define~$\arenagame(L) = (S, s_0, S_c,S_e,\Sigma_c, \Sigma_e, \trans,\win)$ with $S = S_c \cup S_e$, $S_c = \set{s_0} \cup \SigmaO$, $S_e = \SigmaI$, $\Sigma_c = \SigmaI$, $\Sigma_e = \SigmaO$, $\trans\!\!(s, a) = a$ for all $s \in S_c$ and $a \in \SigmaI$, and $\trans\!\!(s,b) =  b$ for all $s \in S_e$ and $b \in \SigmaO$.
Finally, we define $\win = \set{s_0 s_1 s_2 \cdots \mid \binom{s_0}{s_1}\binom{s_2}{s_3}\binom{s_4}{s_5}\cdots \in L}$.
\end{transformation}

The following remark states that the two transformations are inverses of each other, which simplifies the proof of correctness of the second transformation.
It follows by a careful inspection of the definitions.

\begin{remark}
\label{remark:transformationsareinverses}
Let $L \subseteq (\SigmaI \times \SigmaO)^\omega$. Then, $L = {L(\arenagame(L))}$.
\end{remark}

Now, we show that the second transformation is correct, again using complementation to implement the perspective switch.

\begin{lemma}
\label{lemma:fromDelayToDelayedControl}
Let $L \subseteq (\SigmaI\times\SigmaO)^\omega$ and $k \ge 0$. Player~$I$ wins $\delaygame{L}$ if and only if controller wins $\arenagame(\complement{L})$ under delay~$2k$.
\end{lemma}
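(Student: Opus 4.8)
The plan is to reduce \cref{lemma:fromDelayToDelayedControl} to the already-established \cref{lemma:fromDelayedControlToDelay} by exploiting \cref{remark:transformationsareinverses}, rather than reproving the strategy correspondence from scratch. The key observation is that the two transformations are mutually inverse on languages, so applying the first transformation to the game $\arenagame(\complement{L})$ produced by the second should recover a delay game whose winning condition is (the complement of) $\complement{L}$, i.e.\ $L$ itself. Concretely, I would first instantiate \cref{remark:transformationsareinverses} with the language $\complement{L}$ in place of $L$, yielding $\complement{L} = L(\arenagame(\complement{L}))$. Complementing both sides gives $L = \complement{L(\arenagame(\complement{L}))}$.

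Next I would apply \cref{lemma:fromDelayedControlToDelay} to the game $\arenagame(\complement{L})$ together with the even delay $\delta = 2k$ (this is even, so the hypothesis of that lemma is met). The lemma states that controller wins $\arenagame(\complement{L})$ under delay $2k$ if and only if Player~$I$ wins $\delaygame{\complement{L(\arenagame(\complement{L}))}}$ for the lookahead $k = \frac{\delta}{2} = \frac{2k}{2}$. Substituting the identity $\complement{L(\arenagame(\complement{L}))} = L$ obtained in the previous step, the right-hand side becomes exactly \myquot{Player~$I$ wins $\delaygame{L}$.} Chaining the two biconditionals then yields precisely the claimed equivalence: Player~$I$ wins $\delaygame{L}$ if and only if controller wins $\arenagame(\complement{L})$ under delay~$2k$.

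The main obstacle I anticipate is not the logical chaining, which is essentially bookkeeping, but verifying \cref{remark:transformationsareinverses} carefully enough that the instantiation at $\complement{L}$ is legitimate. In particular I would double-check that the second transformation is well-defined when applied to an arbitrary $L \subseteq (\SigmaI \times \SigmaO)^\omega$ (here $\complement{L}$), that the disjointness-of-$\SigmaI,\SigmaO$ assumption can be taken without loss of generality, and that the state set $S_c = \set{s_0} \cup \SigmaO$, $S_e = \SigmaI$ and the deterministic transition structure $\trans\!\!(s,a)=a$ genuinely make the map $\play$ a bijection between action sequences in $(\Sigma_c \Sigma_e)^\omega$ and plays, so that $L(\arenagame(\complement{L}))$ recovers $\complement{L}$ on the nose. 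Since \cref{remark:transformationsareinverses} is asserted to hold by inspection of the definitions, the proof is short, but the delay-to-lookahead arithmetic ($\delta = 2k$ versus $k = \frac{\delta}{2}$) is the one place where an off-by-one factor could creep in, so I would confirm it explicitly when substituting.
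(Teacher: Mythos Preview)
Your proposal is correct and follows essentially the same approach as the paper's own proof: apply \cref{lemma:fromDelayedControlToDelay} to $\arenagame(\complement{L})$ with $\delta = 2k$, then use \cref{remark:transformationsareinverses} instantiated at $\complement{L}$ to identify $\complement{L(\arenagame(\complement{L}))}$ with $L$, and chain the biconditionals. Your additional remarks about checking the well-definedness of the second transformation and the delay/lookahead arithmetic are reasonable caution but do not add to the argument itself.
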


\section{Results}
\label{sec:results}

\cref{lemma:fromDelayedControlToDelay} and \cref{lemma:fromDelayToDelayedControl} allow us to transfer results from delay games to games under delayed control.
Due to the definitions of strategies in games under delayed control not being dual, we consider both players independently, controller in \cref{subsec:controller} and environment in \cref{sec:environment}.

Recall that delay that allows controller to win satisfies a monotonicity property (see \cref{remark:delaycontrol}.\ref{remarkitem:delaycontrol:monotonicity}): if controller wins a game under delay~$\delta$, then also under every delay~$\delta' < \delta$.
Thus, the set of delays for which controller wins is downward-closed, i.e., it is either a finite set~$\set{0,2,4\ldots, \delta_{\max}}$ or it is equal to the set~$2\nats$ of even numbers.
In the following, we study the complexity of determining whether controller wins under all possible delays, whether she wins under a given delay, and determine bounds on~$\delta_{\max}$.

Note that winning for environment is independent of delay and boils down to the classical notion of winning delay-free games~\cite{GTW02}, which is a well-studied problem.
Hence, we disregard this problem.
However, we do discuss the relation between environment in a game under delayed control and Player~$O$ in the simulating delay game constructed in the previous section.

\subsection{Controller's View}
\label{subsec:controller}

Before we present our results, we need to specify how to measure the size of games and delay games, especially how winning conditions are represented (recall that, so far, they are just $\omega$-languages).
In the following, we only consider $\omega$-regular winning conditions specified by $\omega$-automata (see \cref{subsec:automata}) or formulas of Linear Temporal Logic (LTL)~\cite{Pnueli}, which subsume the typical specification languages for winning conditions.
Hence, the size of a game~$(S, s_0, S_c, S_e, \Sigma_c, \Sigma_e, \trans, \win)$ under delayed control is given by the sum~$\size{S} + \size{\Sigma_c} + \size{\Sigma_e} + \size{\win}$, where $\size{\win}$ is the size of an automaton or LTL formula (measured in the number of distinct subformulas) representing $\win$.
Analogously, for a delay game~$\delaygame{L}$, we define the size of $L$ as the size of an automaton or LTL formula (measured in the number of distinct subformulas) representing $L$.
The bound~$k$ is encoded in binary, if necessary.

\paragraph{Safety.}

A game~$\arenagame = (S, s_0, S_c, S_e, \Sigma_c, \Sigma_e, \trans, \win)$ with winning condition~$\win$ is a safety game if $\win$ is accepted by a deterministic safety automaton. 

\begin{remark}
When Chen et al.\ introduced safety games under delayed control, they did not use automata to specify their winning plays, but instead equipped the game with a set of unsafe states and declared all those plays winning for controller that never visit an unsafe state.
It is straightforward to see that our definition is equivalent, as their definition is captured by a deterministic safety automaton with two states.
Conversely, taking the product of a game and a deterministic safety automaton yields an equivalent game with a state-based safety condition. 
\end{remark}

Our results rely on the following two bounds on the transformations presented in \cref{sec:transformation}, which are obtained by applying \cref{remark:automataduality}:
\begin{enumerate}
    \item If the winning condition~$\win$ for a game~$\arenagame$ under delayed control is given by a deterministic safety automaton with $n$ states, then the winning condition~$\complement{L(\arenagame)}$ is recognized by a deterministic reachability automaton with $n$ states.
    
    \item Dually, if the winning condition~$L \subseteq (\SigmaI \times \SigmaO)^\omega$ of a delay game is given by a deterministic reachability automaton with $n$ states, then the winning condition of the game~$\arenagame(\complement{L})$ under delayed action is recognized  by a deterministic safety automaton with $\bigo(n\cdot\size{\SigmaI})$ states.
\end{enumerate}

We begin by settling the complexity of determining whether controller wins a given safety game under every delay, which follows from the \pspace-completeness of determining whether there is a lookahead that allows Player~$O$ to win a given delay game with reachability winning condition~\cite{KleinZ14}.

\begin{theorem}
\label{thm:controlleruniversality}
The following problem is \pspace-complete: Given a safety game~$\arenagame$, does controller win $\arenagame$ under every delay~$\delta$?
\end{theorem}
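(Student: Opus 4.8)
The plan is to reduce this problem to the known \pspace-completeness of determining whether Player~$O$ wins a delay game with a reachability winning condition for \emph{some} lookahead, using the machinery already established in \cref{sec:transformation}. I would proceed by establishing both directions separately.

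\textbf{Upper bound (membership in \pspace).} First I would apply \cref{lemma:fromDelayedControlToDelay}: controller wins $\arenagame$ under delay~$\delta$ if and only if Player~$I$ wins $\delaygame{\complement{L(\arenagame)}}$ for $k = \frac{\delta}{2}$. By the monotonicity properties (\cref{remark:delaymonotonicity}), Player~$I$ winning for all $k$ is equivalent to Player~$I$ winning for arbitrarily large~$k$, which in turn (by determinacy of delay games with Borel conditions, noted after \cref{remark:delaymonotonicity}) is equivalent to Player~$O$ \emph{not} winning $\delaygame{\complement{L(\arenagame)}}$ for \emph{any}~$k$. Thus the question \myquot{does controller win $\arenagame$ under every even delay~$\delta$} translates to \myquot{does Player~$O$ fail to win the delay game for every lookahead.} Here I would invoke the first transformation bound from \cref{sec:transformation}: since $\win$ is given by a deterministic safety automaton with $n$ states, the complemented condition~$\complement{L(\arenagame)}$ is recognized by a deterministic reachability automaton with $n$ states (by \cref{remark:automataduality}). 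Determining whether Player~$O$ wins a delay game with a reachability condition for some lookahead is in \pspace by \cite{KleinZ14}; since \pspace is closed under complementation, the universality question for controller is also in \pspace. The transformation itself is polynomial, so this all composes to a \pspace procedure.

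\textbf{Lower bound (\pspace-hardness).} For hardness I would use \cref{lemma:fromDelayToDelayedControl} to reduce in the opposite direction. Given an instance of the \pspace-hard problem \myquot{does Player~$O$ win $\delaygame{L'}$ with reachability condition~$L'$ for some $k$,} I would consider its complement question and translate it through the second transformation. Concretely, starting from a delay game whose safety condition for Player~$I$ corresponds to a reachability condition for Player~$O$, the bound stated in \cref{sec:transformation} guarantees that $\arenagame(\complement{L})$ is a safety game under delayed control with only $\bigo(n \cdot \size{\SigmaI})$ states, and by \cref{lemma:fromDelayToDelayedControl}, Player~$I$ wins $\delaygame{L}$ for every~$k$ exactly when controller wins $\arenagame(\complement{L})$ under every delay~$2k$. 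Using determinacy and monotonicity once more to convert \myquot{Player~$O$ wins for some~$k$} into \myquot{Player~$I$ loses for some~$k$,} i.e.\ \myquot{controller does not win under every delay,} yields a polynomial-time reduction from the complement of the \pspace-complete delay-game problem; again closure of \pspace under complement finishes the argument.

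The main obstacle I anticipate is not the reductions themselves but carefully managing the chain of equivalences between \myquot{wins for all $k$,} \myquot{wins for arbitrarily large $k$,} and the negation of \myquot{the opponent wins for some $k$.} This requires invoking both halves of \cref{remark:delaymonotonicity} together with determinacy of delay games, and it is the step where a sloppy treatment could silently swap quantifiers. I would therefore state explicitly the equivalence \emph{controller wins $\arenagame$ under every delay} $\iff$ \emph{Player~$O$ does not win $\delaygame{\complement{L(\arenagame)}}$ for any~$k$}, justify it once via monotonicity and determinacy, and then lean on it in both directions. The remaining details---that the transformations are polynomial and preserve the automaton types as claimed---are routine given the bounds already recorded in \cref{sec:transformation}.
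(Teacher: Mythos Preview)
Your proposal is correct and follows essentially the same approach as the paper: reduce both directions to the \pspace-complete problem of whether Player~$O$ wins a reachability delay game for some lookahead, using \cref{lemma:fromDelayedControlToDelay} for membership and \cref{lemma:fromDelayToDelayedControl} for hardness, together with determinacy of delay games and closure of \pspace under complement. The detour through monotonicity is unnecessary---determinacy at each fixed~$k$ already gives the equivalence between \myquot{Player~$I$ wins for all~$k$} and \myquot{Player~$O$ wins for no~$k$} directly---but it does no harm.
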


Next, we give a lower bound on the complexity of determining whether controller wins a given safety game under a given delay, which is derived from a lower bound for delay games with reachability winning conditions.

\begin{theorem}
\label{thm:fixeddelta}
The following problem is \pspace-hard: Given a safety game~$\arenagame$ and $\delta$ (encoded in binary), does controller win $\arenagame$ under delay~$\delta$.
\end{theorem}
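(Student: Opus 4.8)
The plan is to reduce from a \pspace-complete problem about delay games and push it through \cref{lemma:fromDelayToDelayedControl}. Recall from~\cite{KleinZ14} that determining whether Player~$O$ wins a delay game~$\Gamma_k(L)$ for \emph{some} lookahead~$k$ is \pspace-complete already when $L$ is given by a deterministic reachability automaton, and that an exponential lookahead is \emph{sufficient}: if Player~$O$ wins for any~$k$, then he already wins for some~$k \le K$ with $K = 2^{\mathrm{poly}(n)}$, where $n$ is the size of the reachability automaton. Combining this sufficiency with the monotonicity of \cref{remark:delaymonotonicity} (Player~$O$'s winning is upward-closed in~$k$), Player~$O$ wins $\Gamma_k(L)$ for some~$k$ if and only if he wins the single game~$\Gamma_K(L)$ for this fixed, exponentially large~$K$. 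Crucially, the binary encoding of~$K$ has only polynomial length, so $K$ can be written down within the reduction.

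Given such an instance, consisting of~$L$ together with a reachability automaton of size~$n$, I would build the game under delayed control~$\arenagame(\complement{L})$ from the second transformation. Since $L$ is recognized by a reachability automaton, $\complement{L}$ is recognized by a safety automaton (\cref{remark:automataduality}), so $\arenagame(\complement{L})$ is a \emph{safety} game whose winning condition is given by a deterministic safety automaton of size~$\bigo(n \cdot \size{\SigmaI})$; the whole construction is polynomial. I then set $\delta = 2K$, which is even and again admits a polynomial-size binary encoding. By \cref{lemma:fromDelayToDelayedControl}, controller wins $\arenagame(\complement{L})$ under delay~$\delta = 2K$ if and only if Player~$I$ wins $\Gamma_K(L)$.

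It remains to bridge the switch of perspective between the two players. Since delay games with Borel (in particular, reachability) winning conditions are determined~\cite{KleinZ14}, Player~$I$ wins $\Gamma_K(L)$ if and only if Player~$O$ does \emph{not} win $\Gamma_K(L)$, which by the collapse established above holds if and only if Player~$O$ wins $\Gamma_k(L)$ for \emph{no}~$k$. Hence the map $(L,n) \mapsto (\arenagame(\complement{L}), 2K)$ is a polynomial-time many-one reduction from the \emph{complement} of the delay-game solving problem to the problem at hand. As \pspace{} is closed under complement, that complement problem is itself \pspace-hard, and the claimed \pspace-hardness of deciding whether controller wins a given safety game under a given (binary-encoded) delay follows.

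The step I expect to demand the most care is the quantitative handling of the lookahead bound. I must fix a concrete exponential~$K$ from~\cite{KleinZ14} that is simultaneously (i) a valid sufficient bound for reachability conditions, so that ``wins for some~$k$'' genuinely collapses to ``wins for $k = K$'' once monotonicity is applied, and (ii) small enough in binary to keep the reduction polynomial. A secondary bookkeeping point is parity: the produced delay~$\delta = 2K$ must be even, which it is by construction, matching the convention that games under delayed control use even delays only.
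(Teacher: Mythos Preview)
Your proposal is correct and follows essentially the same route as the paper: both start from the \pspace-completeness of the existential-lookahead problem for reachability delay games in~\cite{KleinZ14}, collapse it to a fixed exponential lookahead~$K$ via the sufficiency bound plus monotonicity, flip to Player~$I$ using determinacy and closure of \pspace under complement, and then push the instance through \cref{lemma:fromDelayToDelayedControl} with $\delta = 2K$. The paper is slightly terser in that it cites the fixed-$k$ hardness directly (via a footnote pointing to the proof of Theorem~4.1 in~\cite{KleinZ14}) rather than deriving it from sufficiency and monotonicity as you do, but the underlying argument is the same.
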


Note that we do not claim any upper bound on the problem considered in \cref{thm:fixeddelta}.
There is a trivial \twoexp upper bound obtained by hardcoding the delay into the graph of the safety game, thereby obtaining a classical delay-free safety game. 
It is open whether the complexity can be improved. 
Let us remark though that, via the correspondence to delay games presented in \cref{sec:transformation}, improvements here would also yield improvements on the analogous problem for delay games, which is open too~\cite{Zimmermann22}.

Next, we turn our attention to bounds on the delay for which controller wins.
Recall that due to monotonicity, the set of delays for which controller wins is downward-closed, i.e., it is either a finite set~$\set{0,2,4\ldots, \delta_{\max}}$ or it is equal to $2\nats$.
In the following, we present tight bounds on the value~$\delta_{\max}$.

As a consequence, we settle a conjecture by Chen et al.: They conjectured that there is some delay~$\delta_t$ (exponential in $\size{\arenagame}$), such that if controller wins $\arenagame$ under delay~$\delta_t$, then she wins under every delay.
Note that this conjecture implies that $\delta_{\max}$ is at most exponential.

The following theorem proves Chen et al.'s conjecture, while \cref{thm:delaylb} shows that $\delta_t$ must necessarily be exponential.
For $\delta_{\max}$ this means it is at most exponential for every game, and can be exponential for some games.

The following two results are again obtained from similar bounds for delay games with reachability winning conditions.

\begin{theorem}
\label{thm:universaldelta}
Let $\arenagame$ be a safety game. 
There is a $\delta_t \in \bigo(2^{\size{\arenagame}})$ such that if controller wins $\arenagame$ under delay~$\delta_t$, then she wins $\arenagame$ under every $\delta$.
\end{theorem}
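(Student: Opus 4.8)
The plan is to transfer the corresponding bound from delay games via the transformation of \cref{sec:transformation}, rather than to argue directly about games under delayed control. By \cref{lemma:fromDelayedControlToDelay}, controller wins $\arenagame$ under delay~$\delta$ if and only if Player~$I$ wins $\delaygame{\complement{L(\arenagame)}}$ for $k = \frac{\delta}{2}$. Hence a bound of the form ``if Player~$I$ wins for some lookahead, she wins for every lookahead beyond some threshold $k_t$'' translates directly into the desired statement with $\delta_t = 2k_t$, provided $k_t$ is suitably bounded.

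First I would invoke the known bounds for delay games from Klein and Zimmermann~\cite{KleinZ14}, as cited in the introduction: for a delay game whose winning condition (for Player~$O$) is given by a deterministic reachability automaton, exponential lookahead is sufficient to win every game that can be won by Player~$O$. I must, however, phrase everything from Player~$I$'s perspective, since the correspondence is between controller and Player~$I$. By \cref{remark:delaymonotonicity}, the set of lookaheads for which Player~$I$ wins is \emph{downward}-closed, so the interesting threshold is the largest $k$ for which Player~$I$ still wins. The dual reading of the Klein--Zimmermann bound is: if Player~$O$ wins for some (hence, by monotonicity, all sufficiently large) $k$, then he already wins with exponential lookahead $k_t$; equivalently, if Player~$I$ wins for lookahead $k_t$, then she wins for every $k$. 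Combining this with determinacy of delay games with Borel winning conditions (stated after \cref{remark:delaymonotonicity}) closes the perspective switch cleanly.

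Next I would track the size of the automaton to which the exponential bound is applied. By the first of the two bounds stated just before \cref{thm:controlleruniversality}, if $\win$ for $\arenagame$ is given by a deterministic safety automaton with $n$ states, then $\complement{L(\arenagame)}$ is recognized by a deterministic reachability automaton with $n$ states, where $n = \bigo(\size{\arenagame})$. So the Klein--Zimmermann exponential lookahead bound yields a threshold $k_t \in \bigo(2^{\size{\arenagame}})$ for the delay game $\delaygame{\complement{L(\arenagame)}}$. Setting $\delta_t = 2k_t$ then gives $\delta_t \in \bigo(2^{\size{\arenagame}})$, exactly as claimed. Applying \cref{lemma:fromDelayedControlToDelay} in both directions converts the ``win at $k_t \Rightarrow$ win at all $k$'' statement for Player~$I$ into ``controller wins at $\delta_t \Rightarrow$ controller wins at all even $\delta$'', which by monotonicity (\cref{remark:delaycontrol}) covers all $\delta$.

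The main obstacle I anticipate is not any deep combinatorics but the bookkeeping around the perspective switch and monotonicity directions: the exponential lookahead result of~\cite{KleinZ14} is naturally stated for Player~$O$ winning, whereas I need it for Player~$I$, so I must be careful that the threshold I extract really is the largest lookahead at which Player~$I$ can still win, and that determinacy legitimately turns ``$O$ wins for all large $k$'' into ``$I$ loses for all large $k$''. A secondary point to verify is that the exponential bound in~\cite{KleinZ14} is indeed stated for \emph{reachability} (or at least safety/parity) automata and applies to the complemented condition $\complement{L(\arenagame)}$ here; this is exactly where \cref{remark:automataduality} and the first size bound before \cref{thm:controlleruniversality} do the work of keeping the automaton a reachability automaton with only $n$ states, so that the exponent stays linear in $\size{\arenagame}$.
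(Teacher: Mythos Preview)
Your proposal is correct and follows essentially the same approach as the paper: transfer to the delay game $\delaygame{\complement{L(\arenagame)}}$ via \cref{lemma:fromDelayedControlToDelay}, invoke the Klein--Zimmermann exponential lookahead bound for reachability conditions, use determinacy to rephrase it from Player~$O$'s perspective to Player~$I$'s, note that $\complement{L(\arenagame)}$ is recognized by a reachability automaton of size at most $\size{\arenagame}$, and then translate back with $\delta_t = 2k_t$. Your explicit discussion of the monotonicity directions and the perspective switch is, if anything, more careful than the paper's own write-up.
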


Finally, we show that the exponential upper bound on $\delta_{\max}$ is tight.

\begin{theorem}
\label{thm:delaylb}
For every $n > 1$, there is a safety game~$\arenagame_n$ of size~$\bigo(n)$ such that  controller wins $\arenagame$ under delay~$2^n$, but not under delay~$2^n+2$.
\end{theorem}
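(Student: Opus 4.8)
The plan is to transfer the known exponential lower bound on the lookahead necessary in delay games with reachability conditions to games under delayed control, using \cref{lemma:fromDelayToDelayedControl}. Klein and Zimmermann~\cite{KleinZ14} exhibit a family of delay games witnessing that exponential lookahead is necessary, already for conditions given by deterministic safety automata for Player~$I$, i.e., reachability conditions for Player~$O$. The starting point is therefore a family $\delaygame{L_n}$ over fixed, constant-size alphabets $\SigmaI, \SigmaO$, where $L_n \subseteq (\SigmaI \times \SigmaO)^\omega$ is recognized by a deterministic reachability automaton of size $\bigo(n)$ and where the winner flips at a sharp exponential threshold: Player~$I$ wins $\delaygame{L_n}$ for lookahead $k = 2^{n-1}$ but loses for lookahead $k = 2^{n-1}+1$. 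Sharpness of such a threshold is free from \cref{remark:delaymonotonicity} together with determinacy of delay games (Player~$I$ winning is downward-closed in $k$, Player~$O$ winning upward-closed, and the two partition each fixed instance). If the available construction realizes its threshold at a different exponential value, I would reindex and, if necessary, pad it with a short fixed prefix gadget so that the flip lands exactly between $2^{n-1}$ and $2^{n-1}+1$; since only the threshold value matters, this preserves the asymptotic size.

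Given this family, I set $\arenagame_n = \arenagame(\complement{L_n})$, the game under delayed control produced by the second transformation of \cref{sec:transformation}. By bound~(2) of the Safety paragraph, $\arenagame_n$ is a safety game whose condition is recognized by a deterministic safety automaton with $\bigo(n \cdot \size{\SigmaI})$ states, while its state space $S_c = \set{s_0} \cup \SigmaO$, $S_e = \SigmaI$ and its action sets are of constant size. Since $\SigmaI$ and $\SigmaO$ are constant-size, the total size of $\arenagame_n$ is $\bigo(n)$, as required by the statement. This is exactly the point at which the plan relies on the Klein--Zimmermann family being of linear size over a constant alphabet, rather than merely polynomial.

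Finally, I apply \cref{lemma:fromDelayToDelayedControl}, which states that Player~$I$ wins $\delaygame{L_n}$ with lookahead $k$ if and only if controller wins $\arenagame_n$ under delay~$2k$. Instantiating $k = 2^{n-1}$, where Player~$I$ wins, yields that controller wins $\arenagame_n$ under delay~$2^n$. Instantiating $k = 2^{n-1}+1$, where Player~$I$ does \emph{not} win (Player~$O$ does, by determinacy of the delay game), yields through the biconditional that controller does \emph{not} win $\arenagame_n$ under delay~$2^n+2$. Note that this last step uses only the biconditional of \cref{lemma:fromDelayToDelayedControl} plus determinacy of delay games; it does not invoke determinacy of the game under delayed control, which fails in general.

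The main obstacle is securing the delay game family with a threshold at \emph{exactly} lookahead $2^{n-1}$ while keeping the reachability automaton of linear size over a constant alphabet. The existing results guarantee an exponential threshold of order $\Theta(2^n)$ and sharpness of the flip point, but pinning the threshold to the precise value $2^{n-1}$ — so that the induced delay thresholds are exactly $2^n$ and $2^n+2$ — is the delicate part and is where tuning (reindexing or padding) of the Klein--Zimmermann construction is needed.
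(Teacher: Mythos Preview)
Your approach coincides with the paper's: cite the Klein--Zimmermann exponential lower bound on lookahead for delay games with reachability conditions, apply Transformation~2 to obtain the safety game~$\arenagame_n=\arenagame(\complement{L_n})$, and read off the delay thresholds via \cref{lemma:fromDelayToDelayedControl}. The paper uses one direction of \cref{lemma:fromDelayToDelayedControl} for the positive claim and, for the negative claim, \cref{lemma:fromDelayedControlToDelay} together with \cref{remark:transformationsareinverses}; this is equivalent to your direct use of the biconditional in \cref{lemma:fromDelayToDelayedControl}. Your remarks about reindexing or padding to place the flip exactly at $k=2^{n-1}$ are the bookkeeping the paper leaves implicit.

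The one place your argument is exposed is the alphabet assumption. You explicitly premise the $\bigo(n)$ size bound on the Klein--Zimmermann family living over \emph{constant-size} $\SigmaI,\SigmaO$, but the paper quotes that family with $\size{\SigmaI}=n$. Under a growing input alphabet, bound~(2) of the Safety paragraph yields a winning-condition automaton for $\arenagame(\complement{L_n})$ of size $\bigo(n\cdot\size{\SigmaI})=\bigo(n^2)$, and the arena of $\arenagame(\complement{L_n})$ already has $1+\size{\SigmaI}+\size{\SigmaO}$ states and action sets of the same order. So the transformed game comes out as $\bigo(n^2)$ rather than $\bigo(n)$. This does not touch the method or the exponential nature of the gap, but your stated $\bigo(n)$ rests on an assumption you flag yet do not discharge, and which the cited construction does not satisfy as quoted; hitting $\bigo(n)$ exactly would require either a constant-alphabet variant of the lower-bound family or a bespoke automaton for the winning condition of $\arenagame(\complement{L_n})$ that avoids the generic $\size{\SigmaI}$-factor blow-up in bound~(2).
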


\paragraph{Parity.}

Next, we consider the case of $\omega$-regular winning conditions, given by deterministic parity automata.
Applying \cref{remark:automataduality} yields the following two bounds on the transformations from \cref{sec:transformation}:
\begin{enumerate}
    \item If the winning condition~$\win$ for a game~$\arenagame$ under delayed control is given by a deterministic parity automaton with $n$ states, then the winning condition~$\complement{L(\arenagame)}$ is recognized by a deterministic parity automaton with $n$ states.
    
    \item Dually, if the winning condition~$L \subseteq (\SigmaI \times \SigmaO)^\omega$ of a delay game is given by a deterministic parity automaton with $n$ states, then the winning condition of the game~$\arenagame(\complement{L})$ under delayed action is recognized  by a deterministic parity automaton with $\bigo(n\cdot\size{\SigmaI})$ states.
\end{enumerate}
Exponential lookahead is both sufficient to win all $\omega$-regular delay games that can be won and required to win some of these games~\cite{KleinZ14}.
Furthermore, determining whether there is some lookahead that allows Player~$O$ to win a given $\omega$-regular delay game is \exptime-complete~\cite{KleinZ14}.
As in the case of safety games, we can transfer these results to games under delayed control with $\omega$-regular winning conditions.

\begin{theorem}
\hfill
\begin{enumerate}
    \item The following problem is \exptime-complete: Given a game~$\arenagame$ with $\omega$-regular winning condition specified by a deterministic parity automaton, does controller win $\arenagame$ under every delay~$\delta$?
    \item Let $\arenagame$ be a game with $\omega$-regular winning condition specified by a deterministic parity automaton with $n$ states. 
There is a $\delta_t \in \bigo(2^{n^2})$ such that if controller wins $\arenagame$ under delay~$\delta_t$, then she wins $\arenagame$ under every $\delta$.
    \item For every $n > 1$, there is a game~$\arenagame_n$ of size $\bigo(n^2)$ with $\omega$-regular winning condition specified by a two-state deterministic parity automaton~$\aut_n$ such that controller wins $\arenagame$ under delay~$2^n$, but not under delay~$2^n+2$.
\end{enumerate}
\end{theorem}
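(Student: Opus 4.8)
The plan is to transfer all three assertions from the corresponding facts about $\omega$-regular delay games via the two transformations of \cref{sec:transformation}, exactly as in the safety case, now using the parity-preserving size bounds stated immediately above this theorem. Throughout, the governing subtlety is that the cited results of \cite{KleinZ14} speak about Player~$O$ (winning for \emph{some} lookahead, with exponential lookahead being \emph{sufficient}), whereas our correspondence links controller to Player~$I$; bridging the two requires determinacy of $\omega$-regular delay games together with the monotonicity of \cref{remark:delaymonotonicity}.

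For item~(1), I would first observe that by \cref{lemma:fromDelayedControlToDelay} controller wins $\arenagame$ under every delay if and only if Player~$I$ wins $\delaygame{\complement{L(\arenagame)}}$ for every $k$. By determinacy this is equivalent to Player~$O$ winning $\delaygame{\complement{L(\arenagame)}}$ for \emph{no} $k$, i.e.\ to the complement of the \exptime-complete problem ``is there a lookahead with which Player~$O$ wins?'' \cite{KleinZ14}. The first parity bound makes $\arenagame \mapsto \complement{L(\arenagame)}$ a polynomial-time reduction (the automaton keeps $n$ states), and since \exptime is closed under complement, membership in \exptime follows. For hardness I would reduce the (still \exptime-complete) complementary problem ``does Player~$I$ win $\delaygame{L}$ for every $k$?'' using \cref{lemma:fromDelayToDelayedControl}: controller wins $\arenagame(\complement{L})$ under delay $2k$ iff Player~$I$ wins $\delaygame{L}$ for $k$, so, as all delays are even, controller wins $\arenagame(\complement{L})$ under every delay iff Player~$I$ wins for every $k$; the second parity bound makes this reduction polynomial as well.

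For item~(2), I would invoke the sufficiency direction of \cite{KleinZ14}: there is an exponential bound $K(n) \in \bigo(2^{n^2})$ such that whenever Player~$O$ wins $\delaygame{\complement{L(\arenagame)}}$ for some $k$ she already wins for $k = K(n)$, where the first parity bound keeps the recognising automaton at $n$ states. By monotonicity (\cref{remark:delaymonotonicity}) and determinacy, Player~$I$ winning for $k = K(n)$ is then equivalent to Player~$I$ winning for every $k$. Setting $\delta_t = 2K(n) \in \bigo(2^{n^2})$ and translating through \cref{lemma:fromDelayedControlToDelay}, winning under delay $\delta_t$ coincides with winning under every delay, which is exactly the claim. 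For item~(3), I would note that the family $\arenagame_n$ of \cref{thm:delaylb} already suffices: its winning condition is a safety condition given by a two-state deterministic safety automaton, which is literally a two-state deterministic parity automaton (colour the safe state even and the unsafe sink odd, cf.\ \cref{remark:automataduality}). Its size is $\bigo(n) \subseteq \bigo(n^2)$ and its delay thresholds are precisely $2^n$ versus $2^n+2$, so no new construction is required.

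The main obstacle is the perspective switch underlying the first three paragraphs: every usable ingredient from \cite{KleinZ14} is phrased for Player~$O$, so each item hinges on correctly dualising ``Player~$O$ wins for some, respectively the exponential, $k$'' into ``Player~$I$ wins for every $k$'' via determinacy and monotonicity, and---for item~(1)---on the closure of \exptime under complementation. Once this dualisation is set up cleanly, the parity size bounds guarantee that all reductions remain polynomial and that the lookahead bound for an $n$-state parity automaton transports to the delay bound $\delta_t \in \bigo(2^{n^2})$.
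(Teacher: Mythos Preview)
Your proposal is correct and follows essentially the same route the paper indicates: items~(1) and~(2) are obtained by transferring the \exptime-completeness and the exponential-lookahead sufficiency from \cite{KleinZ14} via \cref{lemma:fromDelayedControlToDelay,lemma:fromDelayToDelayedControl}, dualising from Player~$O$ to Player~$I$ through determinacy and closure of \exptime under complement, and item~(3) is, as the paper itself remarks, just a restatement of \cref{thm:delaylb} since safety conditions are parity conditions. Your observation that the unsafe state must be a sink for the two-state parity encoding to work is exactly right and worth keeping explicit.
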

 Note that the lower bound on $\delta_t$ is just a restatement of \cref{thm:delaylb}, as safety games have $\omega$-regular winning conditions.

\paragraph{Linear Temporal Logic.}

Finally, one can also transfer the triply-exponential upper and lower bounds on the necessary lookahead in delay games with LTL winning conditions as well as the \threeexp-completeness of determining whether Player~$O$ wins such a delay game with respect to some lookahead~\cite{KleinZ16} to games under delayed control with LTL winning conditions.
Here, we exploit the following facts:
\begin{enumerate}
    \item If the winning condition~$\win$ for a game~$\arenagame$ under delayed control is given by an LTL formula~$\varphi$, then the winning condition~$\complement{L(\arenagame)}$ is given by an LTL formula of size~$\bigo(\size{\varphi})$.
    
    \item Dually, if the winning condition~$L \subseteq (\SigmaI \times \SigmaO)^\omega$ of a delay game is given by an LTL formula~$\varphi$, then the winning condition of the game~$\arenagame(\complement{L})$ under given action is given by an LTL formula of size $\bigo(\size{\varphi})$.
\end{enumerate}

\begin{theorem}
\hfill
\begin{enumerate}
    \item The following problem is \threeexp-complete: Given a game~$\arenagame$ with winning condition specified by an LTL formula~$\varphi$, does controller win $\arenagame$ under every delay~$\delta$?
    \item Let $\arenagame$ be a game with $\omega$-regular winning condition specified by an LTL formula~$\varphi$. 
There is a $\delta_t \in \bigo(2^{2^{2^{\size{\varphi} + \size{\arenagame}}}})$ such that if controller wins $\arenagame$ under delay~$\delta_t$, then she wins $\arenagame$ under every $\delta$.
    \item For every $n > 1$, there is a game~$\arenagame_n$ of size $\bigo(n^2)$ with winning condition specified by an LTL formula~$\varphi_n$ of size~$\bigo(n^2)$ such that controller wins $\arenagame$ under delay~$2^{2^{2^n}}$, but not under delay~$2^{2^{2^n}}+2$.
\end{enumerate}
\end{theorem}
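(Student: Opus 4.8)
The plan is to transfer all three items from the corresponding results for delay games with LTL winning conditions~\cite{KleinZ16} through the interreductions \cref{lemma:fromDelayedControlToDelay} and \cref{lemma:fromDelayToDelayedControl}, exactly as was done for safety and parity conditions. The one ingredient used throughout is the chain of equivalences: controller wins $\arenagame$ under every even delay~$\delta$ iff (by \cref{lemma:fromDelayedControlToDelay}) Player~$I$ wins $\delaygame{\complement{L(\arenagame)}}$ for every $k=\frac{\delta}{2}$, iff (by determinacy of delay games~\cite{KleinZ14} together with the monotonicity recorded in \cref{remark:delaymonotonicity}) Player~$O$ wins $\delaygame{\complement{L(\arenagame)}}$ for no~$k$. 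Thus ``controller wins under every delay'' is precisely the complement of ``Player~$O$ wins $\delaygame{\complement{L(\arenagame)}}$ for some~$k$''.

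For item~1 I would record this as two complementing polynomial-time reductions between the problem $P$ (``does controller win $\arenagame$ under every delay?'') and the \threeexp-complete problem $B$ (``does Player~$O$ win a given LTL delay game for some lookahead?''~\cite{KleinZ16}). \cref{lemma:fromDelayedControlToDelay} together with the size bound for $\complement{L(\arenagame)}$ gives $P \le_p \complement{B}$ (the resulting delay game has an LTL winning condition of size polynomial in $\size{\arenagame}+\size{\varphi}$), while \cref{lemma:fromDelayToDelayedControl} and the dual size bound give $\complement{B} \le_p P$ (from an LTL delay game $\delaygame{L}$ one builds $\arenagame(\complement{L})$ with a polynomial LTL condition, and Player~$O$ wins $\delaygame{L}$ for no~$k$ iff controller wins $\arenagame(\complement{L})$ under every delay). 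Hence $P \equiv_p \complement{B}$; since \threeexp is a deterministic time class and therefore closed under complement, $\complement{B}$ is \threeexp-complete, and so is $P$.

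For item~2 I would make the equivalence quantitative. Let $K$ be the triply-exponential upper bound on the lookahead needed to win LTL delay games~\cite{KleinZ16}, applied to $\complement{L(\arenagame)}$, whose relevant size is $\bigo(\size{\varphi}+\size{\arenagame})$ (the game graph being encoded into the delay game's condition). By upward-closure of Player~$O$'s winning set, ``Player~$O$ wins for some~$k$'' coincides with ``Player~$O$ wins for $k=K$'', so by determinacy ``Player~$I$ wins for $k=K$'' already forces ``Player~$I$ wins for every~$k$''. Translating back via \cref{lemma:fromDelayedControlToDelay}: if controller wins under $\delta_t = 2K = \bigo(2^{2^{2^{\size{\varphi}+\size{\arenagame}}}})$, then she wins under every~$\delta$. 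For item~3 I would transfer the matching triply-exponential lower bound of~\cite{KleinZ16}: that work provides, for each $n$, an LTL delay game whose set of lookaheads won by Player~$I$ has a threshold of order $2^{2^{2^n}}$. Pushing this family through \cref{lemma:fromDelayToDelayedControl} (which relates delay~$\delta$ to lookahead $\frac{\delta}{2}$) and the dual size bound yields games $\arenagame_n$ and LTL formulas $\varphi_n$ of size $\bigo(n^2)$ with maximal winning delay exactly $2^{2^{2^n}}$, i.e.\ controller wins under $2^{2^{2^n}}$ but not under $2^{2^{2^n}}+2$.

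I expect the difficulty to be bookkeeping rather than genuine mathematics, the heavy lifting already residing in~\cite{KleinZ16}. The points requiring care are: keeping the perspective switch and complementation consistently aligned so that ``controller wins for every~$\delta$'' is matched with ``Player~$O$ wins for no~$k$''; invoking closure of \threeexp under complement so that the complementing reductions stay within the class; and carrying the factor-of-two between delay and lookahead together with the even-delay convention and the precise threshold of~\cite{KleinZ16} accurately enough to land exactly on $2^{2^{2^n}}$ versus $2^{2^{2^n}}+2$ in item~3.
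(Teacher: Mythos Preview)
Your proposal is correct and follows essentially the same approach as the paper: it transfers the \threeexp-completeness and the triply-exponential lookahead bounds of~\cite{KleinZ16} through \cref{lemma:fromDelayedControlToDelay} and \cref{lemma:fromDelayToDelayedControl}, using determinacy of delay games and closure of \threeexp under complement, precisely mirroring the arguments given for the safety case in the proofs of \cref{thm:controlleruniversality}, \cref{thm:universaldelta}, and \cref{thm:delaylb}. The paper does not spell out a separate proof for the LTL theorem but relies on the same template together with the stated size bounds on the LTL formulas under both transformations.
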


To conclude, let us just remark that the results presented here also allow us to transfer results obtained for delay games with quantitative winning conditions~\cite{KleinZ16,Zimmermann16,Zimmermann17} to games under delayed control with quantitative winning conditions.
In fact, our result works for any winning condition, as long as the two transformations described in \cref{sec:transformation} are effective.

\subsection{Environment's View}
\label{sec:environment}

In \cref{sec:transformation}, we proved a tight correspondence between controller in a game under delayed control and Player~$I$ in a delay game.
Thus, it is natural to ask whether environment and Player~$O$ also share such a tight correspondence.
A first indication that this is not the case can be obtained by considering the determinacy of these games:
While delay games with Borel winning conditions are determined~\cite{KleinZ15}, even safety games under delayed action are not necessarily determined~\cite{ChenFLMZ21}.

Upon closer inspection, this is not surprising, as the strategies in games under delayed control are not dual between the players:
controller is at a disadvantage as she only gets delayed access to the actions picked by environment while environment does not benefit from this disadvantage. 
He does not get access to the actions picked by controller in advance.
In a delay game however, the strategy definitions are completely dual: Player~$I$ has to grant lookahead on her moves which Player~$O$ gets access to.
Thus, environment is in a weaker position than Player~$O$.\footnote{The difference can be formalized in terms of the information the players have access to: safety games under delay are incomplete-information games while delay games are complete-information games. Although interesting, we do not pursue this angle any further.}

In this section, we study the correspondence between environment and Player~$O$ in detail by formally proving that environment is weaker than Player~$O$.

\begin{lemma}
\label{lemma:envtoinput}
Let $\arenagame$ be a safety game. If environment wins $\arenagame$ then Player~$O$ wins $\delaygame{\complement{L(\arenagame)}}$ for every $k$.
\end{lemma}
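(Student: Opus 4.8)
\textbf{The plan is to} take a winning strategy~$\stratE$ for environment in the safety game~$\arenagame$ and use it, essentially verbatim, to build a winning strategy~$\stratO$ for Player~$O$ in the delay game~$\delaygame{\complement{L(\arenagame)}}$, for an arbitrary fixed~$k$. The key observation is an asymmetry in information: in the delay game, Player~$O$ picks his $n$-th letter~$y_n$ after having seen $x_0 \cdots x_{n+k}$, i.e., with full knowledge of Player~$I$'s moves up to index $n+k$; by contrast, environment in $\arenagame$ chooses his action $\sigma_{2n+1}$ with access to the entire play prefix, including controller's most recent action $\sigma_{2n}$. Since $k \ge 0$, Player~$O$ always knows at least as much as environment does at the corresponding point, so environment's winning strategy can be simulated by Player~$O$ without needing any lookahead beyond what he already has.

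\textbf{First I would} recall from \cref{remark:transformationsareinverses} and the first transformation that a sequence of actions $\sigma_0 \sigma_1 \sigma_2 \cdots \in (\Sigma_c \Sigma_e)^\omega$ induces a unique play $\play(\sigma_0 \sigma_1 \cdots)$ in $\arenagame$, and that the letters $x_n \in \SigmaI = \Sigma_c$ chosen by Player~$I$ correspond to controller's actions $\sigma_{2n}$ while the letters $y_n \in \SigmaO = \Sigma_e$ chosen by Player~$O$ correspond to environment's actions $\sigma_{2n+1}$. By the definition of $L(\arenagame)$, the outcome $\binom{x_0 x_1 \cdots}{y_0 y_1 \cdots}$ lies in $L(\arenagame)$ iff $\play(\sigma_0 \sigma_1 \cdots)$ is winning for controller; hence it lies in $\complement{L(\arenagame)}$ iff the induced play is winning for environment. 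Thus a play of the delay game is won by Player~$O$ exactly when the simulated play of $\arenagame$ is won by environment.

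\textbf{Next I would} define $\stratO(x_0 \cdots x_n)$ as follows. Given the prefix $x_0 \cdots x_n$ of Player~$I$'s moves (which Player~$O$ has access to when picking $y_n$), interleave it with Player~$O$'s own previously chosen letters $y_0 \cdots y_{n-1}$ to form the action sequence $\sigma_0 \sigma_1 \cdots \sigma_{2n-1} \sigma_{2n}$ with $\sigma_{2i} = x_i$ and $\sigma_{2i+1} = y_i$; this induces via $\play$ a play prefix ending in an environment state, and I set $y_n = \stratO(x_0 \cdots x_n) := \stratE$ applied to that play prefix. Because Player~$O$ has already seen $x_0, \dots, x_{n+k}$ and in particular $x_n = \sigma_{2n}$, this definition is well-formed; the lookahead is simply discarded. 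An easy induction on $n$ shows that any play consistent with $\stratO$ is the image under the action correspondence of a play of $\arenagame$ consistent with $\stratE$. Since $\stratE$ is winning for environment, the induced play is won by environment, so by the equivalence above the delay-game outcome lies in $\complement{L(\arenagame)}$ and Player~$O$ wins. As $k$ was arbitrary, Player~$O$ wins for every~$k$.

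\textbf{The main obstacle} I expect is purely bookkeeping: carefully matching the index shift between the delay-game rounds and the action positions $2n, 2n+1$ in the induced play, and verifying that the information $\stratE$ requires (the full play prefix up to the current environment state) is genuinely available to Player~$O$ from $x_0 \cdots x_n$ together with $y_0 \cdots y_{n-1}$. The conceptual content is the inequality $k \ge 0$ guaranteeing that the lookahead only helps Player~$O$, so that no obstruction arises from the delay; the safety nature of $\arenagame$ plays no essential role beyond ensuring $\complement{L(\arenagame)}$ is well-behaved, and indeed the argument would go through for any winning condition.
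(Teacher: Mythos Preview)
Your proposal is correct and follows essentially the same approach as the paper: define $\stratO$ by simulating $\stratE$ on the play prefix induced by the actions seen so far, ignoring the extra lookahead, and conclude that every play consistent with $\stratO$ corresponds to a play consistent with $\stratE$ and is therefore winning for environment, hence its outcome lies in $\complement{L(\arenagame)}$. The only wrinkle is a harmless indexing wobble (you use $x_i$ first as a round-$i$ move with $|x_0|=k+1$ and then as a single letter with Player~$O$ having seen $x_0,\dots,x_{n+k}$), which is exactly the bookkeeping you flag yourself; once the indices are made consistent, your construction coincides with the paper's.
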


Now, we show that the converse direction fails.

\begin{lemma}
\label{lemma:envfailure}
There is a safety game~$\arenagame$ such that Player~$O$ wins $\delaygame{\complement{L(\arenagame)}}$ for some $k$, but environment does not win $\arenagame$.
\end{lemma}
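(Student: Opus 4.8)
The plan is to exploit the very asymmetry highlighted just before the statement: delay games are determined, whereas safety games under delayed control need not be. Concretely, I would start from a safety game $\arenagame$ that is \emph{undetermined} under some delay~$\delta$; such a game exists by Chen et al.~\cite{ChenFLMZ21} and is exhibited in \cref{remark:undetermined}. It is convenient to first record a reformulation of ``environment wins'': since environment's strategies are delay-free and delay-$0$ controller strategies are exactly the classical ones (\cref{remark:delaycontrol}), the statement ``environment wins $\arenagame$'' coincides with environment winning the underlying delay-free safety game. By determinacy of delay-free safety games, this means that environment does \emph{not} win $\arenagame$ if and only if controller wins $\arenagame$ under delay~$0$.

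With this in hand the argument is short. Being undetermined under delay~$\delta$ means that controller does not win $\arenagame$ under delay~$\delta$ while, simultaneously, environment does not win $\arenagame$; the second conjunct already supplies one half of the claim. For the other half I feed the first conjunct through the interreducibility \cref{lemma:fromDelayedControlToDelay}: by its contrapositive, controller not winning under delay~$\delta$ yields that Player~$I$ does not win $\delaygame{\complement{L(\arenagame)}}$ for $k = \frac{\delta}{2}$. Since $\arenagame$ is a safety game, $\complement{L(\arenagame)}$ is a reachability condition (recognized by a deterministic reachability automaton, as recorded in the safety paragraph of \cref{subsec:controller}) and hence Borel, so the delay game $\delaygame{\complement{L(\arenagame)}}$ is determined~\cite{KleinZ15}. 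As Player~$I$ does not win it, Player~$O$ does. This $k = \frac{\delta}{2}$ then witnesses the lemma: Player~$O$ wins $\delaygame{\complement{L(\arenagame)}}$ for some $k$, yet environment does not win $\arenagame$.

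The combination of \cref{lemma:fromDelayedControlToDelay} with determinacy of delay games is entirely routine, so all the content sits in producing a safety game that is undetermined for some positive delay; this is the main obstacle. By the reformulation above, such games are exactly those in which controller wins delay-free but loses for some larger delay, so any game with a finite delay threshold serves. For instance, the family $\arenagame_n$ of \cref{thm:delaylb}, where controller wins under delay~$2^n$ (hence, by monotonicity, under delay~$0$) but not under delay~$2^n+2$, is undetermined under delay~$2^n+2$ and yields a concrete witness with $k = 2^{n-1}+1$. I would present one such game explicitly and verify its two defining properties directly, rather than merely invoking the existence of an undetermined game.
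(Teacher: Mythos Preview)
Your argument is correct and takes a genuinely different route from the paper. The paper proves the lemma constructively: it exhibits a concrete four-move matching game (\cref{fig:env}) and verifies both clauses directly---environment cannot win because he must commit before seeing controller's second move, while Player~$O$ wins $\delaygame{\complement{L(\arenagame)}}$ for $k=1$ by an explicit flipping strategy exploiting the lookahead. Your approach is instead structural: any safety game that is undetermined under some delay~$\delta$ automatically works, since ``environment does not win'' is half of undeterminacy, and the other half feeds through the contrapositive of \cref{lemma:fromDelayedControlToDelay} plus Borel determinacy of delay games to give Player~$O$ the win at $k=\frac{\delta}{2}$. This is slicker and makes transparent that the lemma is really a corollary of undeterminacy together with the controller/Player~$I$ correspondence already established; the paper's version, by contrast, is self-contained and yields the smallest possible witness ($k=1$) with an explicit Player~$O$ strategy.

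One caveat: do not cite \cref{remark:undetermined} as your source for an undetermined game. That remark appears after \cref{lemma:envfailure} and explicitly relies on its proof (``In the proof of \cref{lemma:envfailure}, we have already established that environment does not win~$\arenagame$''), so invoking it here is circular. Your alternative sources are fine: Chen et al.\ is cited independently in the paper for undeterminacy, and your use of the family~$\arenagame_n$ from \cref{thm:delaylb} (controller wins under delay~$2^n$, hence under delay~$0$ by monotonicity, so environment does not win by delay-free determinacy; yet controller loses under delay~$2^n+2$) is clean and non-circular.
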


\begin{proof}
Let $\arenagame$ be the safety game depicted in \cref{fig:env}. 
With each move, the players place a coin (by either picking heads or tails) and environment wins a play by correctly predicting the second action of controller with his first action.
Clearly, environment has no winning strategy in $\arenagame$ because he has no access to future moves of controller.
Stated differently, if environment picks $h$ ($t$) in his first move, then the play in which the second action of controller is $t$ ($h$) is winning for controller.\footnote{Note that under any delay~$\delta > 0$, controller cannot do this strategically, as she has to fix her first two actions in advance. But, as environment has no access to these fixed actions, he cannot react to them strategically.}

\begin{figure}
    \centering
\begin{tikzpicture}[ultra thick]
    
    \draw[gray!20, rounded corners,line width=2mm] (-1.15,-1) rectangle (13.15,1.3);  
    
    \draw[fill=gray!20,gray!20] (1.5,-1) rectangle (4.5,1.3); 
    \draw[fill=gray!20,gray!20] (7.5,-1) rectangle (10.5,1.3); 
    
    \node[black] at (0,1) {\small \bf C};
    \node[black] at (3,1) {\small \bf E};
    \node[black] at (6,1) {\small \bf C};
    \node[black] at (9,1) {\small \bf E};
    \node[black] at (12,1) {\small \bf C};
    
    \node[mystate] (1) at (0,0) {\phantom{v}};
    \node[mystate] (2) at (3,0) {\phantom{v}};
    \node[mystate] (3) at (6,-.5) {\phantom{v}};
    \node[mystate] (4) at (6, .5) {\phantom{v}};
    
    \node[mystate,fill = black] (5) at (9,-.5) {\phantom{v}};
    \node[mystate] (6) at (9, .5) {\phantom{v}};
    
    \node[mystate,fill = black] (7) at (12,-.5) {\phantom{v}};
    \node[mystate] (8) at (12, .5) {\phantom{v}};
    
    \path[-stealth]
    (-.75,0) edge (1)
    (1) edge[near start] node[above] {$h,t$} (2)
    
    (2) edge[near start] node[above] {$h$} (4)
    (2) edge[near start] node[below] {$t$} (3)
    
    (3) edge[near start] node[below] {$h$} (5)  
    (3.north east) edge[near start] node[below] {$t$} (6)  

    (4) edge[near start] node[above] {$h$} (6)  
    (4.south east) edge[near start] node[above] {$t$} (5)  

    (5) edge[bend left=20, near start] node[below] {$h,t$} (7)  
    (7) edge[bend left=20, near start] node[above] {$h,t$} (5)  

    (6) edge[bend left=20, near start] node[below] {$h,t$} (8)  
    (8) edge[bend left=20, near start] node[above] {$h,t$} (6)
    ;

\end{tikzpicture}
    \caption{A safety game that environment does not win, but Player~$O$ wins the associated delay game. The initial state is marked by an arrow and the unsafe vertices are black. Note that both players have the actions~$h$ and $t$ available.}
    \label{fig:env}
\end{figure}
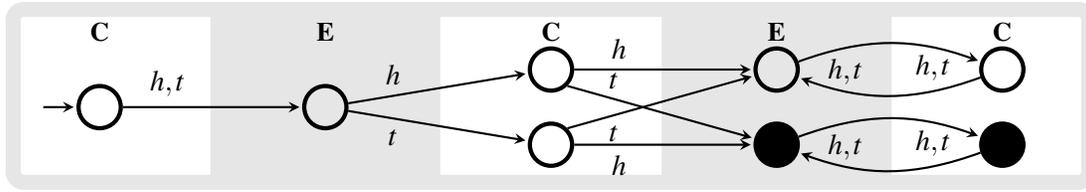

Now, we consider the delay game~$\delaygame{\complement{L(\arenagame)}}$ for $k = 1$.
Recall that the winning condition~$\complement{L(\arenagame)}$ contains the winning plays for Player~$O$, i.e., we have $\binom{ \sigma_0 \sigma_2 \sigma_4 \cdots }{ \sigma_1 \sigma_3 \sigma_5 \cdots} \in \complement{L(\arenagame)}$ if and only if $\sigma_{1} \neq \sigma_{2}$.
It is easy to see that Player~$O$ has a winning strategy in $\delaygame{\complement{L(\arenagame)}}$ by simply flipping the second letter picked by Player~$I$.
This is possible since Player~$I$ has to provide two letters during the first round.
\end{proof}

\begin{remark}
\label{remark:undetermined}
The safety game~$\arenagame$ depicted in \cref{fig:env} is in fact undetermined under every delay~$\delta > 0$.
In the proof of \cref{lemma:envfailure}, we have already established that environment does not win $\arenagame$.
Now, under every delay~$\delta > 0$, controller has to fix at least two actions before getting access to the first action picked by environment. 
This implies that there is, for every strategy for controller under delay~$\delta$, at least one consistent play that is losing for her, i.e., a play in which environment picks $h$ ($t$) if the second move fixed by controller is $t$ ($h$).
Thus, no strategy is winning for controller under delay $\delta$.

Let us remark that, according to our definition of environment strategies, he is not able to enforce a losing play for controller (the game is  undetermined after all), as he does not get access to the second action fixed by controller.
Also, this is again the difference to delay games: Player~$O$ has access to these first two actions when making his first move, and is thereby able to win.
\end{remark}

The full relation between games under delayed control and delay games is depicted in \cref{fig:transf}, restricted to Borel winning conditions (note that both transformations described in \cref{sec:transformation} preserve Borelness).
The equivalence between controller winning the game under delayed control and Player~$I$ winning the corresponding delay game has been shown in \cref{lemma:fromDelayedControlToDelay} and \cref{lemma:fromDelayToDelayedControl}.
Also, \cref{lemma:fromDelayToDelayedControl} and \cref{remark:transformationsareinverses} imply that undetermined safety games under delayed control and those won by environment get transformed into delay games that are won by Player~$O$.
Finally, \cref{lemma:fromDelayedControlToDelay} and \cref{remark:transformationsareinverses} imply that delay games won by Player~$O$ get transformed into undetermined safety games under delayed control or to ones that are won by environment.

\begin{figure}
    \centering
    \begin{tikzpicture}[ultra thick]

        \begin{scope}
            \draw[fill=gray!10, blur shadow={shadow blur steps=10}](0,-.55)circle[x radius=5cm, y radius=.6cm];
            \clip(0,-.55)circle[x radius=5.1cm, y radius=.619cm];
            \draw[] (-1.6,-1.55) -- (-1.6,.75);
            \draw[] (1.6,.75) -- (1.6,-1.55);
        \end{scope}

        \begin{scope}
                
            \draw[fill=gray!10, blur shadow={shadow blur steps=10}](0,-2.5)circle[x radius=5cm, y radius=.6cm];
            \clip(0,-2.5)circle[x radius=5.1cm, y radius=.619cm];    
            \draw[] (0,-1.5) -- (0,-4.5);
        \end{scope}
        
        \node[] at (0,-.55) {\large undetermined};
        \node[] at (-3,-.55) {\large C wins};
        \node[] at ( 3,-.55) {\large E wins};
        \node[] at (-2, -2.5) {\large $I$ wins};
        \node[] at ( 2, -2.5) {\large $O$ wins};

        \path[-stealth]
        (-3.25,-.75) edge[line width= 1mm,bend right,gray] (-3.25,-2.35)
        (-2.75,-2.25) edge[line width= 1mm,bend right,gray] (-2.75,-.75)
        (3.25,-.75) edge[line width= 1mm,gray] (2.2,-2.3)
        (0,-.95) edge[gray, line width=1mm] (1,-2.3)
        (1.6,-2.2) edge[line width= 1mm,gray,-] (1.6,-1.7) 
        (1.6,-1.8) edge[line width= 1mm,gray] (2.2,-.75)
        (1.6,-1.8) edge[line width= 1mm,gray] (1,-.75)
        ;
        
        \node[align=left,anchor=west] at (-8.5,-.4) {\large Games under\\\large  delayed control};
        \node[align=left,anchor=west] at (-8.5,-3) {\large Delay games};
        
    \end{tikzpicture}
    \caption{The relation between games under delayed control and delay games with Borel winning conditions. The upper ellipsis contains pairs~$(\arenagame, \delta)$ consisting of a game~$\arenagame$ under delayed control and a fixed delay~$\delta$; the lower one contains delay games~$\delaygame{L}$ for some fixed~$k$. 
    The arrows represent the two transformations described in \cref{sec:transformation}.}
    \label{fig:transf}
\end{figure}
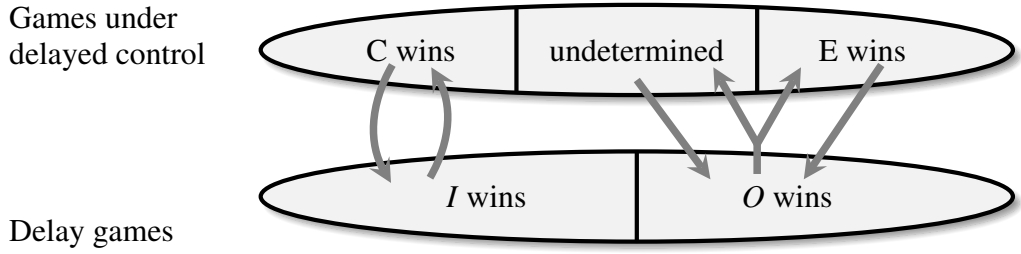

\section{Refining the Correspondence: Sure Winning and Almost Sure Winning}
\label{Sec:AlmostSure}

It should be noted that the above transformations of games under delayed control into delay games and vice versa hinge on the fact that environment in the game under delayed control could, though lacking recent state information to do so strategically, by mere chance play the very same actions that the informed Player~$O$ in the delay game plays in his optimal adversarial strategy. 
That this constitutes a fundamental difference becomes apparent if we consider almost sure winning instead of sure winning. Almost sure winning calls for the existence of a mixed strategy that wins with probability~$1$, i.e., may fail on a set of plays with measure~$0$. This is different from sure winning in the sense of the definition of winning strategies for games under delayed control in \cref{subsec:games}, which calls for a strategy that never fails.

\begin{remark}
We introduce mixed strategies for games under delayed control only, as delay games (with Borel winning conditions) are determined, which means that mixed strategies do not offer any advantage over pure strategies as introduced in \cref{subsec:delaygames}.
\end{remark}

Given an even $\delta \ge 0$, a mixed strategy for controller in $\arenagame$ under delay~$\delta$ is a pair $(\alpha, \stratC)$ where $\alpha \in \prob{(\Sigma_c)^{\frac{\delta}{2}}}$ is a probability distribution over $(\Sigma_c)^{\frac{\delta}{2}}$ and $\stratC \colon \prefsC \rightarrow \prob{\Sigma_c}$ maps play prefixes ending in $S_c$ to probability distributions over actions of controller.
A mixed strategy for environment is a mapping~$\stratE\colon \prefsE \rightarrow \prob{\Sigma_e}$.

The notion of consistency of a play with a strategy simply carries over, now inducing a Markov chain due to the probabilistic nature of the strategies. We say that a mixed strategy for controller (environment) \emph{wins almost surely} if and only if it wins against any strategy of its opponent environment (controller) with probability~$1$, i.e., if and only if the winning condition is satisfied with probability~$1$ over the Markov chain induced by the game and the particular strategy combination. In this section, we write sure winning for winning as defined in \cref{sec:prels}, as is usual for games with randomized strategies.

The notion of almost sure winning alters chances for the players substantially by excluding the possibility of reliably playing an optimal strategy though lacking the information for doing so due to delayed observation. This can be seen from the following lemma, stating a fundamental difference between controller's power in games under delayed control and Player~$I$'s power in the corresponding delay games.

\begin{lemma}\label{lemma:almost-sure-win-vs-sure-loss}
    There is a game~$\arenagame$ under delayed control such that controller wins $\arenagame$ almost surely under some delay~$\delta$ while Player~$O$ (not Player~$I$, which is the player corresponding to controller) wins the corresponding delay game~$\delaygame{\complement{L(\arenagame)}}$ for $k = \frac{\delta}{2}$, and surely so. 
\end{lemma}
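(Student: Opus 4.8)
The plan is to exhibit a single ``iterated coin-matching'' game realising the claimed asymmetry, reusing the mechanism behind \cref{fig:env} but repeating it so that controller gets infinitely many independent chances to match. Concretely, I would take $\Sigma_c = \Sigma_e = \{h,t\}$ and a game that proceeds in rounds: from an environment state~$e$, environment first commits a bit~$\beta$ (moving to a state~$c_\beta$ that records $\beta$), and then from $c_\beta$ controller plays a bit~$\gamma$; if $\gamma = \beta$ the play enters an absorbing accepting region, and otherwise it returns to $e$ for the next round. The initial controller state feeds one forced dummy bit into $e$. I take $\win$ to be the reachability condition ``the accepting region is eventually entered'', i.e.\ controller wins iff she matches environment's bit in some round, so that $\complement{L(\arenagame)}$ is the safety condition ``no match ever occurs''. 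I fix $\delta = 2$, hence $k = \frac{\delta}{2} = 1$.

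The delay-game direction is then immediate. In $\delaygame{\complement{L(\arenagame)}}$, Player~$I$ supplies controller's bits $a_0 a_1 \cdots$ (the top component) and Player~$O$ supplies environment's bits $b_0 b_1 \cdots$ (the bottom component). Matching the indexing of \cref{sec:transformation}, ``match in round~$n$'' is the event $a_{n+1} = b_n$. With lookahead $k=1$, when Player~$O$ must choose $b_n$ he has already seen $a_0, \ldots, a_{n+1}$, in particular $a_{n+1}$. Hence the deterministic strategy ``always play $b_n \neq a_{n+1}$'' prevents every match, so the outcome lies in $\complement{L(\arenagame)}$ and Player~$O$ wins surely. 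This is exactly the informed reversal underlying \cref{lemma:envfailure}.

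For controller I would let her play the uniform distribution at every controller state (with uniform $\alpha$ on her first forced action). The decisive point is that under delay~$\delta = 2$ controller chooses her round-$n$ response $a_{n+1} = \stratC(\pref{\pi}{2n})$ from a prefix that ends at $\pi_{2n}=c_{b_{n-1}}$ and hence does \emph{not} contain environment's current bit $b_n = \sigma_{2n+1}$; symmetrically, when environment commits $b_n$ he has not yet seen $a_{n+1}$. Thus, in the induced Markov chain, conditioned on the entire history up to and including $b_n$, the fresh uniform draw of $a_{n+1}$ matches $b_n$ with probability exactly $\tfrac12$, irrespective of environment's (possibly mixed, history-dependent) strategy. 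A tower-property argument over rounds then gives $\mathcal{P}(\text{no match in the first } N \text{ rounds}) = (\tfrac12)^N$, so the probability of never matching is~$0$ and controller reaches the accepting region, hence wins, almost surely under delay~$2$.

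The main obstacle is the probabilistic step rather than the graph construction: one must verify that environment's adaptivity cannot beat the $\tfrac12$ per-round matching probability. This is precisely where the delay is decisive, since it forces controller's matching bit to be drawn with randomness environment has not observed, keeping the conditional match probability insensitive to his strategy; and it is exactly where the contrast with the informed Player~$O$ of the delay game (who \emph{does} see that bit) turns the coincidence underlying \cref{lemma:fromDelayedControlToDelay} into a genuine separation rather than a measure-zero artefact.
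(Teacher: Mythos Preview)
Your construction is correct and is essentially the same as the paper's: an iterated coin-matching reachability game in which, under delay~$2$, uniform random play by controller hits the target with probability~$\tfrac12$ per round (hence almost surely in the limit), while in the associated delay game Player~$O$'s lookahead of one letter lets him deterministically avoid the target forever. The paper's witness (\cref{fig:C-almost-surely-vs-O-surely}) differs only cosmetically---controller there reaches the target by \emph{mismatching} environment's previous bit rather than matching it, so Player~$O$'s winning move is to copy rather than to flip---and your probabilistic justification via conditional independence of the fresh uniform bit is somewhat more explicit than the paper's one-line appeal to positive probability per step.
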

 
\begin{proof}
    Consider the reachability game in \cref{fig:C-almost-surely-vs-O-surely} under delay~$2$ (or any larger delay). 
    Intuitively, the players place a coin in each round (by picking either heads to tails with each move) and controller wins a play if the black state is visited, which happens if she selects a different coin placement than chosen by environment in the previous move.

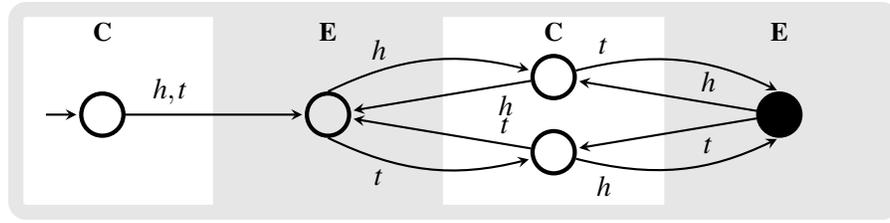
\begin{figure}
    \centering
\begin{tikzpicture}[ultra thick]
    
     \draw[gray!20, rounded corners,line width=2mm] (-1.15,-1.3) rectangle (10.5,1.4);  
    
    \draw[fill=gray!20,gray!20] (1.5,-1.3) rectangle (4.5,1.4); 
    \draw[fill=gray!20,gray!20] (7.5,-1.3) rectangle (10.5,1.4); 
    
    \node[black] at (0,1.1) {\small \bf C};
    \node[black] at (3,1.1) {\small \bf E};
    \node[black] at (6,1.1) {\small \bf C};
    \node[black] at (9,1.1) {\small \bf E};

     \node[mystate] (start) at (-0,0) {\phantom{v}};
     \node[mystate] (et) at (3,0) {\phantom{v}};
     \node[mystate,fill] (eb) at (9,0) {\phantom{v}};
     \node[mystate] (ct) at (6,.5) {\phantom{v}};
     \node[mystate] (cb) at (6,-.5) {\phantom{v}};
    
    \path[-stealth]
    (-.75,0) edge (start)

     (start) edge node[above,near start] {$h,t$} (et)  
     (et.north) edge[bend left=20,near start] node[above] {$h$} (ct)  
     (et.south) edge[bend right=20,near start] node[below] {$t$} (cb)  
     (eb) edge[bend left=0,near start] node[above] {$h$} (ct)  
     (eb) edge[bend left=0,near start] node[below] {$t$} (cb)  

     (ct) edge[bend left=0,very near start] node[below] {$h$} (et)  
     (ct) edge[bend left=20, very near start] node[above] {$t$} (eb.north)  
     (cb) edge[bend right=0, very near start] node[above] {$t$} (et)  
     (cb) edge[bend right=20, very near start] node[below] {$h$} (eb.south);

\end{tikzpicture}
    \caption{A reachability game that, under any positive delay, is won by controller almost surely via the simple randomized strategy of coin tossing (thus randomly generating head and tail events $h$ and $t$), but won by player $O$ surely if interpreted as a delay game due to the lookahead on Player~$I$'s actions granted to Player~$O$. The initial state is marked by an arrow and controller wins if and only if the black vertex is visited at least once.}
    \label{fig:C-almost-surely-vs-O-surely}
\end{figure}

    Under any even (by definition) positive delay, controller wins this game with probability~$1$, i.e., almost surely, by a simple randomized strategy of coin tossing: by in each step randomly selecting action~$h$ or $t$ with positive probability each, an eventual visit of the black state is guaranteed with probability~$1$, irrespective of being uninformed about environment's preceding move due to the delay.

    The corresponding delay game~$\delaygame{\complement{L(\arenagame)}}$ for $k = \frac{\delta}{2}$, however, is easily won by Player~$O$, because in delay games, the delayed Player~$I$ grants a lookahead to Player~$O$. Hence, Player~$O$ can, due to the delay, already see the next move of Player~$I$ such that he can simply copy the next coin placement by Player~$I$, safely staying in the non-black states and thereby win. 
\end{proof}

Note that \cref{lemma:almost-sure-win-vs-sure-loss} implies that the previously observed correspondence between Player~$I$ and controller breaks down when considering almost sure winning strategies instead of just sure winning strategies: 
Games under delayed control for which Player~$O$ wins the corresponding delay game, are no longer either undetermined or won by environment, but may well be won by controller almost surely.

This consequently refines the correspondence between games under delayed control and delay games shown in \cref{fig:transf} as follows.

\begin{theorem}
\label{theorem:refined-correspondence}
    Given a game~$\arenagame$ and an even $\delta \ge 2$,  the following correspondences between $\arenagame$ and the corresponding delay game~$\delaygame{\complement{L(\arenagame)}}$ for $k = \frac{\delta}{2}$ hold:
    \begin{enumerate}
        \item\label{item:csure} Controller surely wins $\arenagame$ under delay~$\delta$ if and only if Player~$I$ surely wins $\delaygame{\complement{L(\arenagame)}}$.

        \item\label{item:calmostsure} If controller almost surely wins $\arenagame$ under delay $\delta$ but cannot surely win $\arenagame$ under delay $\delta$ then Player~$O$ surely wins $\delaygame{\complement{L(\arenagame)}}$.
        
        \item\label{item:e} If environment surely or almost surely wins $\arenagame$ under delay $\delta$ then Player~$O$ wins $\delaygame{\complement{L(\arenagame)}}$.
        
        \item\label{item:undet} If $\arenagame$ is undetermined under delay $\delta$ with respect to  almost sure winning strategies then Player~$O$ wins $\delaygame{\complement{L(\arenagame)}}$.
        
        \item\label{item:nonempty} All the aforementioned classes are non-empty, i.e., there exist games under delayed control where controller wins, where controller wins almost surely (but not surely), where environment wins surely, where environment wins almost surely (but not surely), and games which are undetermined with respect to  almost-sure winning strategies.
    \end{enumerate}
    The above correspondences are depicted in \cref{fig:refined-correspondence}.
\end{theorem}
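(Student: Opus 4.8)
The plan is to organize all five claims around a single ``spine'': the equivalence already established in \cref{lemma:fromDelayedControlToDelay}, combined with the determinacy of delay games. Item~\ref{item:csure} is nothing but a restatement of \cref{lemma:fromDelayedControlToDelay} for $k=\frac{\delta}{2}$, since sure winning for controller (resp.\ Player~$I$) is exactly the pure notion of winning from \cref{subsec:games} (resp.\ \cref{subsec:delaygames}). Taking its contrapositive and invoking the determinacy of delay games with Borel winning conditions~\cite{KleinZ14} (all winning conditions considered here are Borel, and both transformations of \cref{sec:transformation} preserve Borelness), I obtain the equivalence I will call~($\star$): \emph{controller does not surely win $\arenagame$ under $\delta$ if and only if Player~$O$ surely wins $\delaygame{\complement{L(\arenagame)}}$ for $k=\frac{\delta}{2}$}. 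The whole proof of items~\ref{item:csure}--\ref{item:undet} then reduces to feeding each hypothesis into~($\star$).

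Next I would dispatch items~\ref{item:calmostsure}--\ref{item:undet}, each by showing that its hypothesis forces controller to fail to win surely, so that~($\star$) applies. For item~\ref{item:calmostsure} this is immediate, as ``cannot surely win'' is part of the hypothesis. For item~\ref{item:e} I record two one-line observations: if environment surely wins with some $\stratE$, then the unique play consistent with $\stratE$ and any putative sure-winning $(\alpha,\stratC)$ of controller would be simultaneously winning and losing for controller, a contradiction; and if environment wins \emph{almost} surely, then against any pure sure-winning controller strategy every consistent play is winning for controller, so environment would win with probability~$0$ there, contradicting almost-sure winning. Hence in both cases controller does not surely win and~($\star$) yields Player~$O$'s win (for the sure case one may alternatively quote \cref{lemma:envtoinput} directly). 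For item~\ref{item:undet}, I use that a pure sure-winning strategy is a degenerate mixed strategy winning with probability~$1$, so sure winning implies almost-sure winning; the game's being undetermined with respect to almost-sure strategies then rules out controller winning almost surely, hence surely, and~($\star$) applies once more.

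The substantive work is item~\ref{item:nonempty}, the non-emptiness of all five classes. Four witnesses are already on hand: \cref{example:gameunderdelayedcontrol} is won surely by controller; the reachability game of \cref{fig:C-almost-surely-vs-O-surely} (\cref{lemma:almost-sure-win-vs-sure-loss}) is won by controller almost surely and---since Player~$O$ surely wins the associated delay game---not surely, by~($\star$); any game in which environment can force an unsafe state irrespective of controller is surely won by environment; and the prediction game of \cref{fig:env} is undetermined with respect to almost-sure winning, because under $\delta\ge 2$ it reduces to a single round of matching pennies in which controller must fix her relevant action before observing environment's, so whichever player tosses a fair coin holds the opponent's winning probability to $\frac12$, and thus neither wins almost surely. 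The remaining, genuinely new witness is a game won by environment almost surely but not surely, which I expect to be the main obstacle.

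For that witness I would build the mirror image of \cref{fig:C-almost-surely-vs-O-surely}: a safety game in which, in every round, environment places a coin $\sigma_{2n+1}$ and controller then places a coin $\sigma_{2n+2}$ \emph{without} yet seeing $\sigma_{2n+1}$ (as $\delta\ge 2$ delays her observation), and the play becomes unsafe for controller as soon as $\sigma_{2n+1}=\sigma_{2n+2}$. Environment wins \emph{almost surely} by tossing a fair, fresh coin each move: each round matches with probability $\frac12$ independently of controller's unavoidably $\sigma_{2n+1}$-oblivious choice, so a match occurs eventually with probability~$1$. Environment does \emph{not} win surely, though: against any pure environment strategy $f$, controller eventually observes all of environment's past actions, can therefore simulate $f$ to compute the forthcoming $\sigma_{2n+1}$ from the delayed prefix she is allowed to use, and simply plays $\sigma_{2n+2}=1-\sigma_{2n+1}$, avoiding every match. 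The delicate point, which I would verify carefully, is exactly this information bookkeeping---that the action $\sigma_{2n+1}$ produced by a fixed $f$ is determined by the delayed prefix on which $\sigma_{2n+2}$ may depend---so that controller can indeed recompute it; since sure winning for environment means possessing a \emph{pure} winning strategy (\cref{subsec:games}), the failure of every pure $f$ shows that environment does not win surely, while the coin-tossing strategy shows that he wins almost surely. Collecting the five witnesses completes item~\ref{item:nonempty} and hence the theorem, whose content is summarized in \cref{fig:refined-correspondence}.
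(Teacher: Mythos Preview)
Your proposal is correct and follows the same spine as the paper: item~\ref{item:csure} is \cref{lemma:fromDelayedControlToDelay}, and items~\ref{item:calmostsure}--\ref{item:undet} all reduce, via determinacy of the delay game, to the observation that the respective hypothesis prevents controller from winning surely (you spell this out more carefully than the paper does, which simply asserts it). For item~\ref{item:nonempty} you also pick the same witnesses, with one exception: for ``environment wins almost surely but not surely'' you build a fresh mirror-image matching game and verify the information-bookkeeping that lets controller simulate any pure environment strategy. The paper instead obtains this witness in one line by \emph{complementing the winning condition} of the game in \cref{fig:C-almost-surely-vs-O-surely}: once environment is declared the winner iff the black vertex is visited, the roles swap and the analysis already done for \cref{lemma:almost-sure-win-vs-sure-loss} (coin tossing wins almost surely; a pure opponent can be simulated and defeated) carries over verbatim with controller and environment interchanged. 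Your construction is isomorphic to this complemented game, so nothing is wrong, but the paper's shortcut saves you the delicate bookkeeping paragraph you flagged.
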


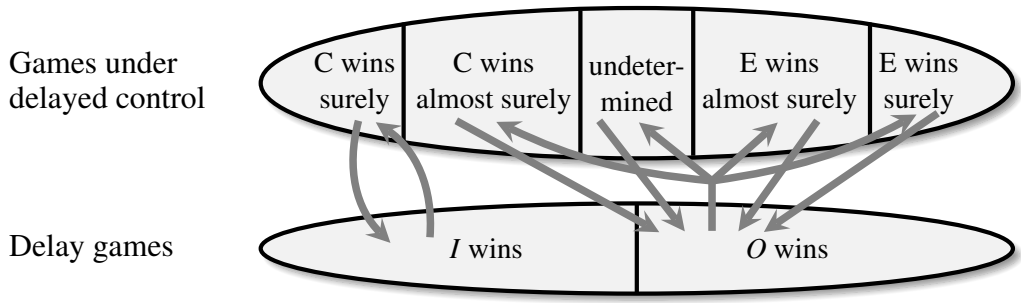
\begin{figure}[t]
    \centering
    \begin{tikzpicture}[ultra thick]

        \begin{scope}
            
            \draw[fill=gray!10, blur shadow={shadow blur steps=10}](0,0)circle[x radius=5cm, y radius=1cm];
            \clip(0,0)circle[x radius=5.1cm, y radius=1.019cm];
             \draw[] (-.75,-1) -- (-.75,1);
             \draw[] (.75,1) -- (.75,-1);
             \draw[] (-3.1,-1) -- (-3.1,1);
             \draw[] (3.1,1) -- (3.1,-1);
        \end{scope}

        \begin{scope}
           
            \draw[fill=gray!10, blur shadow={shadow blur steps=10}](0,-2.2)circle[x radius=5cm, y radius=.6cm];
            \clip(0,-2.2)circle[x radius=5.1cm, y radius=.619cm];        
            \draw[] (0,-1) -- (0,-4.5);
        \end{scope}
        
        \node[align = center] at (0,0) {undeter-\\ mined};
        \node[align = center] at (-3.75,0) {C wins\\ surely};
        \node[align = center] at ( 3.75,0) {E wins\\ surely};
        \node[align = center] at (-1.9,0) {C wins\\ almost surely};
        \node[align = center] at ( 1.9,0) {E wins\\ almost surely};
        \node[] at (-2, -2.2) {$I$ wins};
        \node[] at ( 2, -2.2) {$O$ wins};

        \path[-stealth]
        (-3.7,-.5) edge[line width= 1mm,bend right,gray] (-3.25,-2.15)
        (-2.75,-2.05) edge[line width= 1mm,bend right,gray] (-3.5,-.5)
        
        (4,-.4) edge[line width= 1mm,gray] (1.66,-2)
        (2.4,-.5) edge[line width= 1mm,gray] (1.33,-2)
        (-0.5,-.5) edge[gray, line width=1mm] (.66,-2)
        (-2.4,-.5) edge[gray, line width=1mm] (.33,-2)

        (1,-1.3) edge[line width= 1mm,gray,-] (1,-2)
        (1,-1.3) edge[bend left = 10, line width= 1mm,gray] (-1.9,-.5)
        (1,-1.3) edge[line width= 1mm,gray] (0,-.5)
        (1,-1.3) edge[line width= 1mm,gray] (1.9,-.5)
        (1,-1.3) edge[bend right = 10, line width= 1mm,gray] (3.75,-.4)
        ;
        
        \node[align=left,anchor=west] at (-8.5,0) {\large Games under\\\large delayed control};
        \node[align=left,anchor=west] at (-8.5,-2.2) {\large Delay games};
        
    \end{tikzpicture}
    \caption{The relation between safety games under delayed control and delay games with Borel winning conditions. The upper ellipsis contains pairs~$(\arenagame, \delta)$ consisting of a game~$\arenagame$ under delayed control and a fixed delay~$\delta$; the lower one contains delay games~$\delaygame{L}$ for some fixed~$k$. 
    The arrows represent the two transformations described in \cref{sec:transformation}.}
    \label{fig:refined-correspondence}
\end{figure}

Item~\ref{item:calmostsure}.\ of the above lemma is of particular interest, as it expresses a delay-related strengthening of controller relative to Player~$I$, letting controller win almost surely where Player~$I$ looses for sure.
The correspondence between controller and Player~$I$ observed in the deterministic setting thus breaks down when almost sure winning is considered and mixed strategies are permitted.
\begin{remark}
    In contrast to games under delayed control, where mixed strategies provide additional power to both the controller and the environment, the notions of sure winning and almost sure winning coincide for delay games (with Borel winning conditions) due to their determinacy~\cite{KleinZ15}. Admitting mixed strategies (and almost sure winning) does not provide additional power to either of the two players in a delay game, as the determinacy result always implies existence of an infallible pure strategy for one of the players.
\end{remark}

\section{Conclusion}
\label{sec:conc}

We have compared delay games~\cite{KleinZ14} and games under delayed control~\cite{ChenFLMZ21}, two types of infinite games aiming to model asynchronicity in reactive synthesis,  and have exhibited the differences in definitions and charted the relation between them with respect to  both deterministic and randomized strategies: 
One can efficiently transform a game under delayed control into a delay game such that controller wins the game under delayed control with delay~$\delta$ by a deterministic strategy if and only if Player~$I$ wins the resulting delay game with lookahead of size~$\frac{\delta}{2}$. 
Dually, one can efficiently transform a delay game into a game under delayed control such that Player~$I$ wins the delay game with lookahead of size~$\delta$ if and only if controller wins the resulting game under delayed control with delay~$2\delta$ by a deterministic strategy. 
These results allow us to transfer known complexity results and bounds on the amount of delay from delay games to games under delayed control, for which no such results were known, when considering deterministic strategies.
We also proved that the analogous results fail in the setting of randomized strategies and almost sure winning conditions, as well as for the relation between environment and Player~$O$, both under deterministic and randomized strategies.

\textbf{Acknowledgements:}
Martin Fränzle has been supported by Deutsche Forschungsgemeinschaft under grant no.\ DFG FR 2715/5-1 ``Konfliktresolution und kausale Inferenz mittels integrierter sozio-technischer Modellbildung''. Sarah Winter is a postdoctoral researcher at F.R.S.-FNRS.  
Martin Zimmermann has been supported by DIREC – Digital Research Centre Denmark.

\bibliographystyle{eptcs}

\bibliography{bib}

\end{document}